\def\BibTeX{{\rm B\kern-.05em{\sc i\kern-.025em b}\kern-.08em
    T\kern-.1667em\lower.7ex\hbox{E}\kern-.125emX}}
\begin{document}

\title{The Effect of Ground Truth Accuracy on the Evaluation of Localization Systems}

\author{
\IEEEauthorblockN{Chen Gu}
\IEEEauthorblockA{\textit{Google, USA} \\
guc@google.com}
\and
\IEEEauthorblockN{Ahmed Shokry}
\IEEEauthorblockA{\textit{Alexandria University, Egypt} \\
ahmed.shokry@alexu.edu.eg}
\and
\IEEEauthorblockN{Moustafa Youssef}
\IEEEauthorblockA{\textit{AUC and Alexandria University, Egypt} \\
moustafa@alexu.edu.eg}
}

\maketitle

\begin{abstract}
The ability to accurately evaluate the performance of location determination systems is crucial for many applications. Typically, the performance of such systems is obtained by comparing ground truth locations with estimated locations. However, these ground truth locations are usually obtained by clicking on a map or using other worldwide available technologies like GPS. This introduces ground truth errors that are due to the marking process, map distortions, or inherent GPS inaccuracy.

In this paper, we present a theoretical framework for analyzing the effect of ground truth errors on the evaluation of localization systems. Based on that, we design two algorithms for computing the real algorithmic error from the validation error and marking/map ground truth errors, respectively. We further establish bounds on different performance metrics.

Validation of our theoretical assumptions and analysis using real data collected in a typical environment shows the ability of our theoretical framework to correct the estimated error of a localization algorithm in the presence of ground truth errors. Specifically, our marking error algorithm matches the real error CDF within 4\%, and our map error algorithm provides a more accurate estimate of the median/tail error by 150\%/72\% when the map is shifted by 6m.
\end{abstract}

\begin{IEEEkeywords}
Localization, Real error, Validation error, Marking error, Map error, Rayleigh distribution, Rice distribution
\end{IEEEkeywords}

\section{Introduction}
Location determination technologies have gained momentum recently with a number of outdoor and indoor applications such as directions finding, directed ads, E911, driverless cars, among others. Typically, the performance of location determination systems is quantified through some measures, e.g. the full error CDF, or a single-value metric such as median, mean, or different percentiles. To do that, a ground truth data set is usually collected tagged with the actual user location to compare the estimated location against. This ground truth location can be entered manually, by a user clicking on the map where he/she is standing, or automatically; by using a higher accuracy tracking technology; e.g. using GPS or GNSS systems as ground truth for cellular-based localization technologies
\cite{shokry2018deeploc, rizk2019effectiveness, elbakly2019crescendo, ibrahim2011cellsense, aly2017accurate,paek2011energy}.

These methods for ground truth collection are inherently noisy and may lead to inaccuracy in the reported evaluation results of a particular localization system
\cite{shokry2020dynamicslam, abbas2019wideep, kotaru2015spotfi, shokry2017tale, aly2017accurate, aly2013dejavu, ibrahim2013enabling, gjengset2014phaser, vasisht2016decimeter}. 
In particular, a human ground truth collector may click on the map in the wrong location when marking his/her position. This is especially true in open areas, when there are no landmarks to help determine where the user is standing, or when the user marks the location on the limited mobile device screen. Moreover, in many cases, ground truth locations may be marked at a lower granularity to reduce the collection overhead, e.g. by marking the start and end location of a trace and interpolating the ground truth points between them \cite{paek2011energy, cherntanomwong2009signal}. Furthermore, the map used for marking the ground truth locations may contain errors itself, e.g. have an offset or scale error (Figure~\ref{fig:japan}), an error whose possibility increases with a worldwide deployment. All of these factors could affect the reported accuracy of a localization system, which can be significant when the current state-of-the-art is trying to squeeze the errors in their systems. %

In this paper, we present a theoretical framework for analyzing the effect of ground truth errors on the evaluation of localization systems. Using this framework, we design two algorithms for computing the real algorithmic error from the validation error and marking/map ground truth errors, respectively. %
These algorithms can be used to adjust the reported accuracy on the different metrics of a given localization system to account for ground truth collection issues, as well as to prioritize which factors should be handled more carefully. Specifically, our analysis shows interesting findings: (a) The impact of marking error on evaluation accuracy is quadratic; (b) As the accuracy of the localization system gets better, the impact of error in ground truth grows; (c) The 95\%-tail error is at least twice of the median/mean errors; (d) Marking error has more impact than map error on the mean and median, but less impact on the tail.

We validate our assumptions and analysis on a real indoor WiFi dataset. The results show the ability of our theoretical framework to correct the estimated error of a localization algorithm to obtain a more realistic error in the presence of ground truth errors. In particular, our marking error algorithm matches the real error within 4\% in all percentiles. Moreover, our map error algorithm provides a more accurate estimate of the median/tail error by more than 20\%/5\% when the map is shifted by 1m in X and Y directions. This enhancement increases to 150\%/72\% when the map is shifted by 6m.

To the best of our knowledge, this is the first work that quantifies the effect of ground truth accuracy on the performance of location determination systems.

The rest of the paper is organized as follows: Section \ref{sec:model} presents the mathematical model formulation and notation used in the paper. Section \ref{sec:theory} gives the details of our theoretical analysis followed by the experimental validation of the theoretical results in Section \ref{sec:evaluation}. Section \ref{related_work} discusses related work. Finally, we conclude the paper in section \ref{conc}.

\begin{figure}[t]
\centerline
{\includegraphics[width=0.25\textwidth]{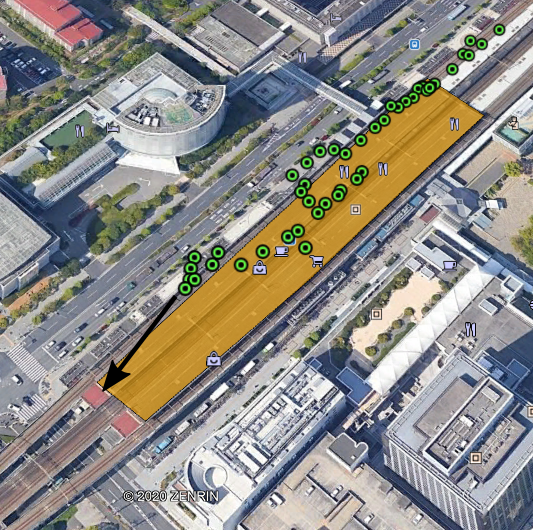}}
\caption{Map error example. %
The human-marked ground truth locations (green circles) inside a railway station differ by a consistent offset of about 30 meters in the northeast direction. This is because %
the used map system had an incorrect alignment of the building.
}
\label{fig:japan}
\end{figure}

\section{Fundamental Mathematical Model}
\label{sec:model}
\newtheorem{example}{Example}

In this section, we introduce the mathematical model used for evaluating the performance of a general location determination system and state the problems we study in this paper.

\subsection{Types of Ground Truth Errors}
First, we define two types of ground truth error: marking error and map error. In Figure \ref{fig:errors}(a), the user is at the green circle but marks his/her location at the blue triangle on an accurate map. In this case, we call the ground truth error as \textbf{\em marking error}. For example, this can be due to the lack of landmarks for the user to determine his/her exact location accurately on an open space on the map of the small screen of a mobile device, or simply due to a mistake from the user.

In Figure \ref{fig:errors}(b), the user would mark his/her location at the blue triangle if an accurate map is used. However, he/she actually marks location at the orange diamond due to using an incorrect map (as in Figure \ref{fig:japan}). In this case, we call this second ground truth error as \textbf{\em map error}.

\subsection{System Model and Problem Statement}

Next, consider a localization algorithm which returns a user's 2D location $X^{algo}$. Let $X^{gt}$ be the user's actual (real) ground truth location and $\widehat{X}^{gt}$ be the human-marked ground truth location.
Ideally we would like to calculate the algorithm's \textbf{\em real error}
$Err^{real} = X^{gt} - X^{algo}$.
However, due to the ground truth error
$Err^{gt} = \widehat{X}^{gt} - X^{gt}$,
the \textbf{\em validation error} actually used to evaluate the algorithm's performance in literature and many practical systems is
\begin{equation*}
Err^{val} = \widehat{X}^{gt} - X^{algo} = Err^{real} + Err^{gt}
\end{equation*}

Clearly, we are more interested in the location accuracy $|Err^{real}|$, rather than $|Err^{val}|$ which is impacted by the ground truth error. Once there is enough evaluation data ($\{X^{algo}, \widehat{X}^{gt}\}$), we can know the cumulative distribution function of $|Err^{val}|$. Furthermore, instead of using the full distribution, the accuracy is commonly expressed using high-level summary statistics, e.g. ``some inexpensive GPS receivers can locate positions to within 10 meters for approximately 95 percent of measurements \cite{Hightower2001}.'' In this paper, we consider three major summary statistics that are commonly used in literature and practical systems to quantify a localization algorithm performance:
\begin{itemize}
    \item The mean error $|Err^{val}|_{mean}$
    \item The 50\%-median error $|Err^{val}|_{median}$
    \item The 95\%-tail error $|Err^{val}|_{tail}$ \footnote{We select 95\% as it is most commonly used for confidence intervals \cite{Zar2007}. But in theory any level can be selected for the tail error such as 90\% or 99\%.}
\end{itemize}
The mean and median reflects the average user experience, while the tail reflects the worst user experience.

We again emphasize that, since we are more interested to know $|Err^{real}|_{mean}$ instead of $|Err^{val}|_{mean}$, we define the {\em \textbf{impact} of ground truth error}
\begin{equation}
\label{eq:delta_notion}
\Delta_{mean} = |Err^{val}|_{mean} - |Err^{real}|_{mean}
\end{equation}
as the difference between the validation error and the real error, i.e. how much of the validation error is caused by the ground truth error. The notions of $\Delta_{median}$ / $\Delta_{tail}$ are similar to (\ref{eq:delta_notion}).

\begin{figure}[t]
\vspace{-1mm}
\centerline
{\includegraphics[width=0.5\textwidth]{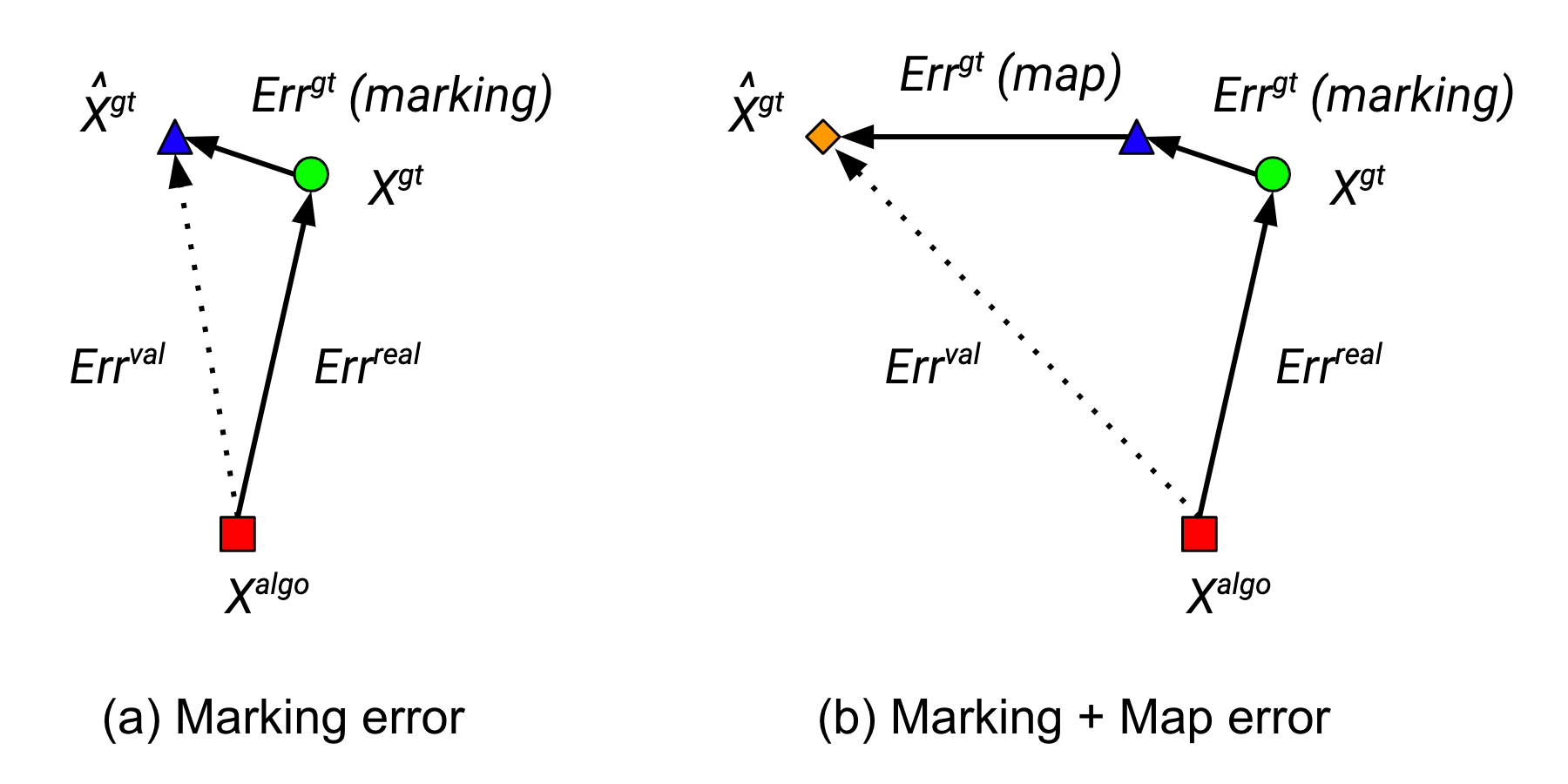}}
\caption{Ground truth errors. (a) A marking error between actual ground truth location (green circle) and human-marked ground truth location on an accurate map (blue triangle). (b) An additional map error between human-marked ground truth location on an accurate map (blue triangle) and that on an incorrect map (orange diamond).}
\label{fig:errors}
\end{figure}

To summarize, in this paper we propose and answer the following questions:
\begin{enumerate}
    \item Given the summary statistics on the validation error $|Err^{val}|_{mean}$ and the ground truth (marking/map) error $|Err^{gt}|_{mean}$, how can one computes the real algorithmic error $|Err^{real}|_{mean}$?
    \item What is the relation between the impact $\Delta_{mean}$ and the ground truth (marking/map) error $|Err^{gt}|_{mean}$?
    \item Can we answer the above two questions for other metrics such as the median or 95\%-tail error, or more generally for any $q$-th quantile?
    \item Which ground truth error has a larger impact on the evaluation of a localization system? Marking error or map error?
\end{enumerate}

\subsection{Application Example}
\label{sec:application}
Given a localization algorithm with validation error $|Err^{val}|_{mean} = 6m$ (which is the main metric commonly used in literature). The human-marked ground truth is collected on an incorrect map with an offset $|Err^{map}| = 2m$, and we estimate there is an average marking error $|Err^{mark}|_{mean} = 3m$.

First, we use Algorithm~\ref{algo:map} (Section \ref{sec:map_error}) to remove the map error, which gives a new validation error $|Err^{val}|_{mean} = 5.79m$ on an accurate map.
Next, we use Algorithm~\ref{algo:marking} (Section \ref{sec:marking_error}) to remove the marking error and end up with a real error $|Err^{real}|_{mean} = 4.95m$. This final error is a more realistic/accurate reflection of the algorithm performance in real time after removing the inaccuracy in ground truth collection.

\section{Theoretical Results}
\label{sec:theory}
\newtheorem{theorem}{Theorem}

Mathematically, instead of giving the summary statistics $|Err^{val}|_{mean}$ and $|Err^{gt}|_{mean}$, even if we know the full distribution of $Err^{val}$ and $Err^{gt}$, it is still hard to solve the distribution of $Err^{real}$ such that
$Err^{val} = Err^{real} + Err^{gt}$.
This is because the real algorithmic error and the ground truth error are independent, but the validation error and the ground truth error are not; and we do not know the correlation between them.

To make the problem mathematically tractable, we assume that the algorithmic and marking errors follow the most widely used normal distributions. We also further consider the map error of a constant translation. %

We start by analyzing the marking error in Section~\ref{sec:marking_error} and map error in Section~\ref{sec:map_error}. We then compare these two errors in Section~\ref{sec:compare_error}, and discuss the \textit{map scale error} in Section~\ref{sec:scale_error}. All assumptions are validated in Section~\ref{sec:evaluation}.

\subsection{Marking Error}
\label{sec:marking_error}

\begin{algorithm}[t]
\caption{Compute the real error from validation error and marking error. This algorithm uses mean but can be extended to any $q$-th quantile ($0 < q < 1$).}
\label{algo:marking}
\begin{algorithmic}
\renewcommand{\algorithmicrequire}{\textbf{Input:}}
\renewcommand{\algorithmicensure}{\textbf{Output:}}
\REQUIRE validation error mean $u = |Err^{val}|_{mean}$, marking error mean $v = |Err^{mark}|_{mean}$, $u > v$
\ENSURE real error mean $|Err^{real}|_{mean}$
\RETURN $|Err^{real}|_{mean} = \sqrt{u ^ 2 - v ^ 2}$
\end{algorithmic}
\end{algorithm}

Assume that both the real algorithmic error and the marking error follow 2D Gaussian distributions\footnote{We write $diag[{\sigma} ^ 2]$ for short to represent a matrix that has ${\sigma} ^ 2$ on all diagonal entries and zero otherwise.}
$Err^{real} \sim \mathcal{N}(0, diag[(\sigma^{real}) ^ 2])$,
$Err^{mark} \sim \mathcal{N}(0, diag[(\sigma^{mark}) ^ 2])$.
We also assume that both error distributions are symmetric in 2D (do not have any orientation bias), so that they have zero mean and the errors in X and Y directions are independent and identically distributed. In this setting, the validation error is also a 2D Gaussian
\begin{equation*}
Err^{val} = Err^{real} + Err^{mark} \sim \mathcal{N}(0, diag[(\sigma^{val})^2])
\end{equation*}
where
$(\sigma^{val}) ^ 2 = (\sigma ^ {real}) ^ 2 + (\sigma^{mark}) ^ 2$.

For the 1D error norm, it is known that for a Gaussian variable $X \sim \mathcal{N}(0, diag[{\sigma}^2])$, its norm $|X|$ follows a distribution $|X| \sim Rayleigh(\sigma)$ \cite{Rayleigh1880} with a probability density function
$p(x | \sigma) = \frac{x}{\sigma^{2}} e^{-\frac{x^2}{2 \sigma^2}}$, $x \geq 0$
and its mean $|X|_{mean}$ is equal to
$Rayleigh(\sigma)_{mean} = \sigma \sqrt{\pi / 2}$.

Thus $|Err^{mark}|$, $|Err^{real}|$ and $|Err^{val}|$ are all Rayleigh distributions. Since the mean is linear in $\sigma$,
\begin{align}
|Err^{real}|_{mean} & = \sigma^{real} \sqrt{\pi / 2} = \sqrt{\left( (\sigma^{val}) ^ 2 - (\sigma^{mark}) ^ 2 \right) \pi / 2} \nonumber \\
& = \sqrt{|Err^{val}|_{mean} ^ 2 - |Err^{mark}|_{mean} ^ 2} \label{eq:real_mean}
\end{align}

Algorithm~\ref{algo:marking} shows how to obtain the real error mean given the validation and marking error means. It also applies to the median and 95\%-tail, because for any $q$-th quantile ($0 < q < 1$),
$Rayleigh(\sigma)_{q \text{-th}} = \sigma \sqrt{-2 \ln{(1 - q)}}$
is always linear in $\sigma$.

\begin{figure}[t]
\vspace{-2mm}
\centerline
{\includegraphics[width=0.25\textwidth]{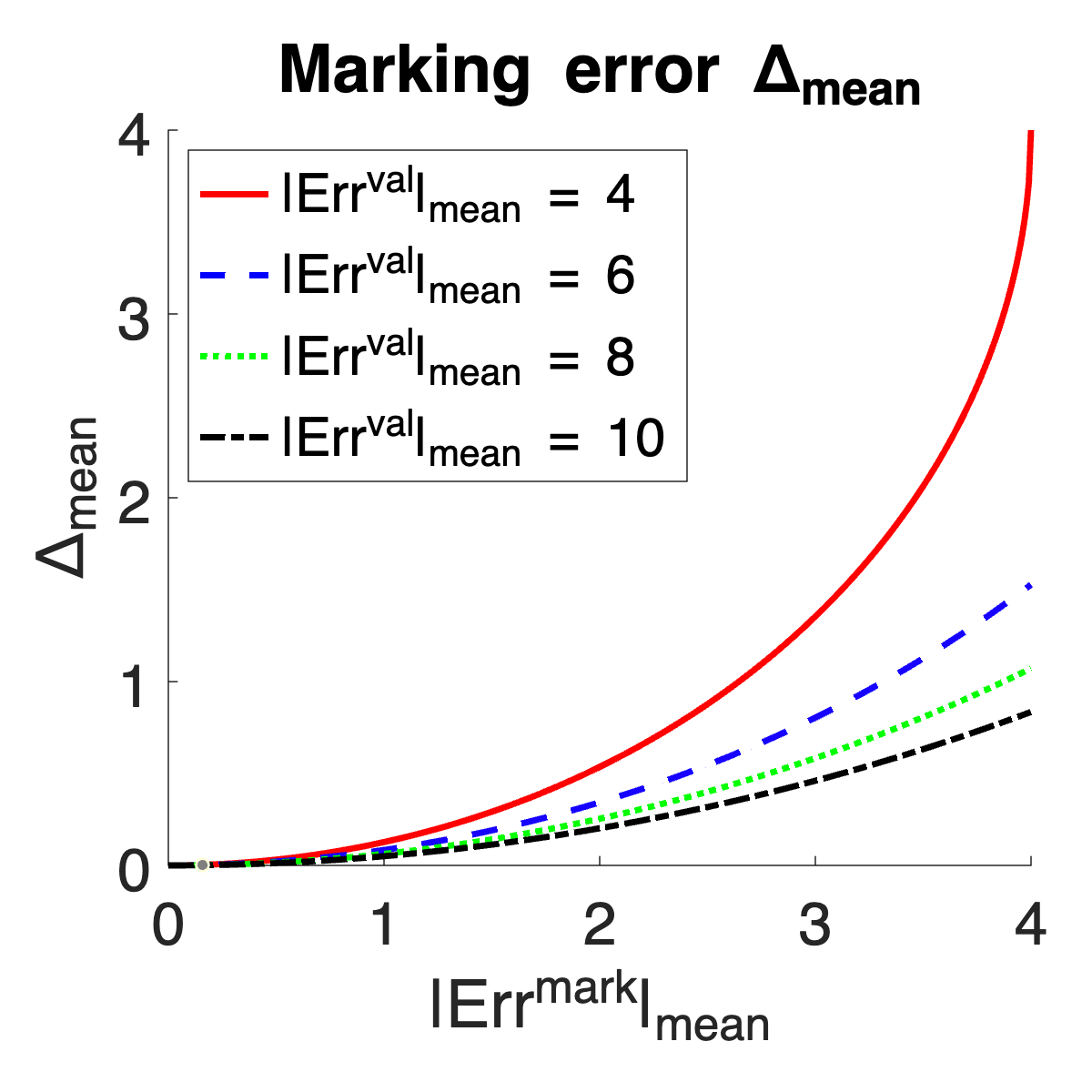}}
\caption{Impact of marking error $\Delta_{mean}$ as a function of $|Err^{val}|_{mean}$ and $|Err^{mark}|_{mean}$. The figures for $\Delta_{median}$ / $\Delta_{tail}$ are similar.}
\label{fig:impact_marking}
\vspace{-2mm}
\end{figure}

\begin{theorem}
\label{thm:impact_marking}
The impact of marking error ($\Delta_{mean}$):
\begin{equation*}
\frac{|Err^{mark}|_{mean}^2}{2 |Err^{val}|_{mean}} < \Delta_{mean}
< \frac{|Err^{mark}|_{mean}^2}{2 |Err^{val}|_{mean} - |Err^{mark}|_{mean}}
\end{equation*}
This also holds if we replace the mean by any $q$-th quantile (e.g. $\Delta_{median}$ and $\Delta_{tail}$).
\end{theorem}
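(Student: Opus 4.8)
The plan is to reduce both inequalities to a single closed form for $\Delta_{mean}$ and then to bound one quantity. Writing $u = |Err^{val}|_{mean}$ and $v = |Err^{mark}|_{mean}$ for brevity, equation~(\ref{eq:real_mean}) already gives $|Err^{real}|_{mean} = \sqrt{u^2 - v^2}$, so by the definition in~(\ref{eq:delta_notion}),
\[
\Delta_{mean} = u - \sqrt{u^2 - v^2}.
\]
The crucial first step is to rationalize this difference by multiplying and dividing by the conjugate $u + \sqrt{u^2 - v^2}$, which collapses the numerator to $u^2 - (u^2 - v^2) = v^2$ and yields
\[
\Delta_{mean} = \frac{v^2}{u + \sqrt{u^2 - v^2}}.
\]
After this rewriting, both sides of the claimed inequality carry the common numerator $v^2$, so the entire theorem amounts to sandwiching the single denominator $D = u + \sqrt{u^2 - v^2}$ between $2u - v$ and $2u$.

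For the lower bound on $\Delta_{mean}$ I would use the trivial upper bound on $D$: since $v > 0$ we have $\sqrt{u^2 - v^2} < u$, hence $D < 2u$, and therefore $\Delta_{mean} > v^2 / (2u)$. For the upper bound on $\Delta_{mean}$ I need the matching lower bound $D > 2u - v$, equivalently $\sqrt{u^2 - v^2} > u - v$. This is where the hypothesis $u > v$ enters: it guarantees that the right-hand side $u - v$ is positive, so I may square both sides safely, reducing the inequality to $u^2 - v^2 > u^2 - 2uv + v^2$, i.e.\ $u > v$, which holds by assumption. This gives $\Delta_{mean} < v^2 / (2u - v)$ and completes the two-sided bound.

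The part requiring the most care is therefore the upper bound, and specifically the role of the standing assumption $u > v$: it is needed not only for $\sqrt{u^2 - v^2}$ to be real (so that the real error is well defined) but also to legitimize the squaring step $\sqrt{u^2 - v^2} > u - v$; the lower bound, by contrast, is essentially free once the rationalization is in hand. Finally, the extension to an arbitrary $q$-th quantile requires no new argument: because $Rayleigh(\sigma)_{q\text{-th}} = \sigma\sqrt{-2\ln(1-q)}$ is linear in $\sigma$ with a constant factor common to the real, marking, and validation errors, the variance identity $(\sigma^{val})^2 = (\sigma^{real})^2 + (\sigma^{mark})^2$ transfers verbatim to the squared $q$-th quantiles, giving $|Err^{real}|_{q} = \sqrt{|Err^{val}|_q^2 - |Err^{mark}|_q^2}$. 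Reinterpreting $u$ and $v$ as the $q$-th quantiles of the validation and marking errors, the identical algebra then yields the same bounds, so $\Delta_{median}$ and $\Delta_{tail}$ are handled at once.
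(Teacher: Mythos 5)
Your proposal is correct and follows essentially the same route as the paper: both rationalize $\Delta_{mean}=u-\sqrt{u^2-v^2}$ by the conjugate to get $v^2/\bigl(u+\sqrt{u^2-v^2}\bigr)$ and then sandwich the denominator between $2u-v$ and $2u$, with the quantile extension following from linearity of the Rayleigh quantiles in $\sigma$. The only difference is that you explicitly justify $\sqrt{u^2-v^2}>u-v$ by squaring, a step the paper leaves implicit.
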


\begin{proof}
From (\ref{eq:real_mean}),
$|Err^{val}|_{mean} - |Err^{mark}|_{mean} < |Err^{real}|_{mean} < |Err^{val}|_{mean}$.
\begin{align*}
\Delta_{mean} & = |Err^{val}|_{mean} - |Err^{real}|_{mean} \\
& = \frac{|Err^{val}|_{mean}^2 - |Err^{real}|_{mean}^2}{|Err^{val}|_{mean} + |Err^{real}|_{mean}} \\
& = \frac{|Err^{mark}|_{mean}^2}{|Err^{val}|_{mean} + |Err^{real}|_{mean}}
\end{align*}
\begin{equation*}
\frac{|Err^{mark}|_{mean}^2}{2 |Err^{val}|_{mean}} < \Delta_{mean}
< \frac{|Err^{mark}|_{mean}^2}{2 |Err^{val}|_{mean} - |Err^{mark}|_{mean}}
\end{equation*}
\end{proof}

Theorem \ref{thm:impact_marking} gives us a direct quantitative measure on the impact of marking error on location accuracy (see Figure~\ref{fig:impact_marking}):
\begin{enumerate}
    \item $\Delta$ is quadratic in $|Err^{mark}|$. This means that a small error in the collected ground truth has a magnified (quadratic) impact on the error in the reported algorithm accuracy. %
    \item $\Delta$ is inversely proportional to $|Err^{val}|$, i.e. when a localization algorithm does not perform well on validation, the impact of marking error is not significant. As the algorithm improves to have better accuracy, the quality of ground truth data becomes more and more important.
\end{enumerate}

\begin{theorem}
\label{thm:tail_median}
Relations between the real error mean, median, and 95\%-tail.
\begin{equation*}
\frac{|Err^{real}|_{tail}}{|Err^{real}|_{median}} = 2.07 ~ \textrm{and} ~
\frac{|Err^{real}|_{tail}}{|Err^{real}|_{mean}} = 1.95
\end{equation*}
These also hold if we replace $|Err^{real}|$ by $|Err^{val}|$.
\end{theorem}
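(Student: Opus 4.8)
The plan is to exploit the fact, established in the paragraphs preceding the theorem, that $|Err^{real}|$ and $|Err^{val}|$ are both Rayleigh distributed, with scale parameters $\sigma^{real}$ and $\sigma^{val}$ respectively. The key structural observation is that every summary statistic of a Rayleigh variable is a fixed multiple of its scale parameter: the mean equals $\sigma\sqrt{\pi/2}$, while the $q$-th quantile equals $\sigma\sqrt{-2\ln(1-q)}$. Therefore, when I form the ratio of any two such statistics the factor of $\sigma$ cancels, leaving a pure numerical constant that is the same regardless of which Rayleigh distribution I started from. This single fact is what makes the ratios identical for $|Err^{real}|$ and $|Err^{val}|$.

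First I would specialize the quantile formula to the two relevant cases. The median is the $q = 0.5$ quantile, giving $\sigma\sqrt{-2\ln(0.5)} = \sigma\sqrt{2\ln(2)}$, and the $95\%$-tail is the $q = 0.95$ quantile, giving $\sigma\sqrt{-2\ln(0.05)} = \sigma\sqrt{2\ln(20)}$. Forming the tail-to-median ratio and cancelling $\sigma$ yields $\sqrt{\ln(20)/\ln(2)}$, which evaluates to the stated constant $\approx 2.07$. For the tail-to-mean ratio I would divide the tail expression by the mean $\sigma\sqrt{\pi/2}$, obtaining $\sqrt{-4\ln(0.05)/\pi}$, which evaluates to $\approx 1.95$.

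Because neither constant depends on the scale parameter, the identical computation carries over verbatim when $\sigma^{real}$ is replaced by $\sigma^{val}$, which establishes the final sentence of the statement. I do not expect any genuine obstacle here: the whole argument rests on the linearity of the Rayleigh summary statistics in $\sigma$. The only points demanding care are substituting $q$ correctly into the quantile formula (so that the tail uses $1 - q = 0.05$ rather than $0.95$) and carrying out the logarithm arithmetic accurately.
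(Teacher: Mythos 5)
Your proposal is correct and follows essentially the same route as the paper: both plug $q=0.5$ and $q=0.95$ into the Rayleigh quantile formula $\sigma\sqrt{-2\ln(1-q)}$, divide by the median and by the mean $\sigma\sqrt{\pi/2}$, and observe that $\sigma$ cancels so the constants are the same for $|Err^{real}|$ and $|Err^{val}|$. The numerical evaluations $\sqrt{\ln(20)/\ln(2)}\approx 2.07$ and $\sqrt{-4\ln(0.05)/\pi}\approx 1.95$ check out.
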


\begin{proof}
\begin{align*}
\frac{|Err^{real}|_{tail}}{|Err^{real}|_{median}} & = \frac{\sigma^{real} \sqrt{-2 \ln{(1 - 0.95)}}}{\sigma^{real} \sqrt{-2 \ln{(1 - 0.5)}}} = 2.07 \\
\frac{|Err^{real}|_{tail}}{|Err^{real}|_{mean}} & = \frac{\sigma^{real} \sqrt{-2 \ln{(1 - 0.95)}}}{\sigma^{real} \sqrt{\pi / 2}} = 1.95
\end{align*}
\end{proof}

Theorem \ref{thm:tail_median} gives us an approximation on the average/worst user experience on location accuracy, e.g. if a GPS receiver has an accuracy of 10 meters for 95\% of the time, we would expect that it can localize within 5 meters for 50\% of the time.

\subsection{Map Error (Translation)}
\label{sec:map_error}

\begin{figure}[t]
\vspace{-1mm}
\centerline
{\includegraphics[width=0.25\textwidth]{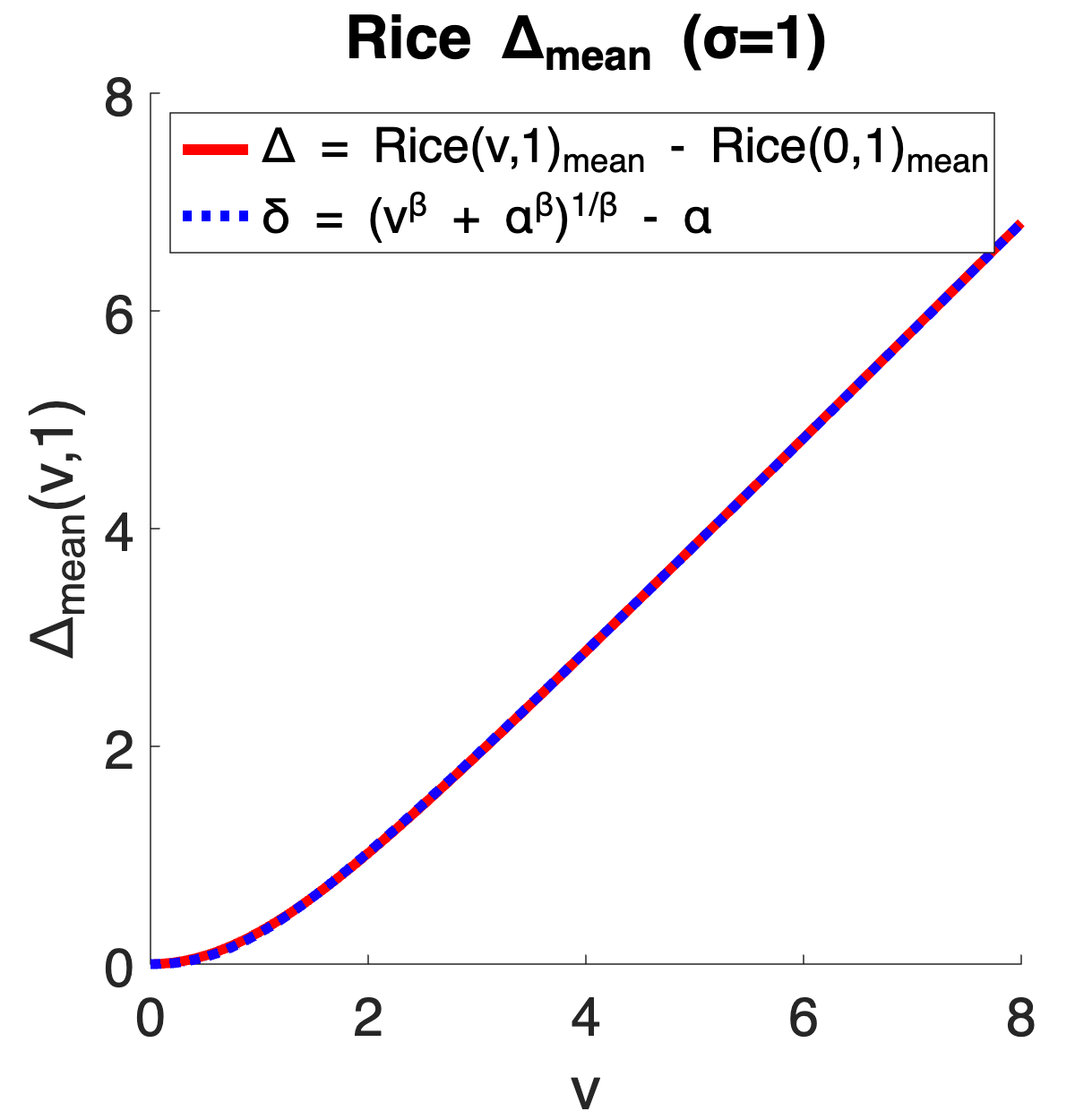}}
\caption{$\Delta_{mean}(v, 1)$ approximation function. $\Delta(v, 1)$ values are calculated by using the numerical Bessel function \cite{Schwartz2012}. $\delta(v, 1)$ values are calculated by using the algebraic approximation function. The figures for $\Delta_{median}$ / $\Delta_{tail}$ are similar.}
\label{fig:delta}
\end{figure}

Again, we assume that %
$Err^{real} \sim \mathcal{N}(0, diag[(\sigma^{real})^2])$
but
$Err^{map} = V$
is a constant 2D vector.
For now we assume that there is no marking error (see Section \ref{sec:application} when both marking and map errors exist). In this setting, the validation error is also a 2D Gaussian but with a non-zero mean
\begin{equation*}
Err^{val} = Err^{real} + Err^{map} \sim \mathcal{N}(V, diag[(\sigma^{real})^2])
\end{equation*}

For simplicity, let $\sigma = \sigma^{real}$ since there is only one variance parameter in this section (as the map error is a constant), and $v = |V|$ be the norm of the map error vector. The 1D error norm
$|Err^{real}| \sim Rayleigh(\sigma)$, $|Err^{map}| = v$.

For a non-zero mean Gaussian variable $X \sim \mathcal{N}(V, diag[{\sigma}^2])$, its norm $|X|$ follows a distribution $|X| \sim Rice(v, \sigma)$ \cite{Rice1944} with a probability density function
$p(x | v, \sigma) = \frac{x}{\sigma ^ 2} e ^ {- \frac{x ^ 2 + v ^ 2}{2 \sigma ^ 2}} I_0 \left( \frac{xv}{\sigma ^ 2} \right)$, $x \geq 0$
where
$I_0(z) = \sum_{k = 0}^{\infty} \frac{(z / 2) ^ {2k}}{(k!) ^ 2}$
is the modified Bessel function of the first kind with order zero.
Note that when $v = 0$, $Rice(v, \sigma)$ is the same as $Rayleigh(\sigma)$.

Finally, the impact of map error is
\begin{align*}
\Delta_{mean}(v, \sigma) & = |Err^{val}|_{mean} - |Err^{real}|_{mean} \\
& = Rice(v, \sigma)_{mean} - Rice(0, \sigma)_{mean}
\end{align*}
and similarly for $\Delta_{median}$ / $\Delta_{tail}$.

\subsubsection{$\Delta(v, \sigma)$ Approximation Function}

\begin{table}[t]
\caption{Constants $\alpha$, $\beta$ for $\delta(v, 1)$ approximation function and root-mean-square error on $|\Delta(v, 1) - \delta(v, 1)|$.}
\begin{center}
\vspace{-1mm}
\begin{tabular}{|c|c|c|c|}
\hline
 & $\alpha$ & $\beta$ & $RMSE~|\Delta - \delta|$ \\
\hline
$\text{Mean}$ & 1.2392 & 2.3064 & 0.0052 \\
$\text{Median}$ & 1.1471 & 2.3384 & 0.0032 \\
$\text{Tail}$ & 0.7870 & 1.9452 & 0.0038 \\
\hline
\end{tabular}
\label{table:alpha_beta}
\end{center}
\end{table}

\begin{table}[t]
\caption{Constant $\gamma$ for $Rice(0, 1)/Rayleigh(1)$.}
\vspace{-1mm}
\begin{center}
\begin{tabular}{|c|c|}
\hline
 & $\gamma$ \\
\hline
$\text{Mean}$ & $\sqrt{\pi / 2} = 1.2533$ \\
$\text{Median}$ & $\sqrt{-2 \ln{(1 - 0.5)}} = 1.1774$ \\
$\text{Tail}$ & $\sqrt{-2 \ln{(1 - 0.95)}} = 2.4477$ \\
\hline
\end{tabular}
\label{table:gamma}
\end{center}
\vspace{-1mm}
\end{table}

Unlike Rayleigh distribution, the mean/quantile statistics of Rice distribution have no closed-form expressions. This is because in general, Bessel functions cannot be expressed as a finite algebraic combination of elementary functions \cite{Ritger1968}. To make the analysis easier, we give an algebraic approximation function for computing the impact $\Delta(v, \sigma)$.

We start with analysis on the mean error. By scaling a factor of $1/\sigma$, we have
\begin{align}
\Delta(v, \sigma) & = Rice(v, \sigma)_{mean} - Rice(0, \sigma)_{mean} \nonumber \\
& = \sigma \left( Rice(v / \sigma, 1)_{mean} - Rice(0, 1)_{mean} \right) \nonumber \\
& = \sigma \Delta (v / \sigma, 1) \label{eq:delta_scale}
\end{align}

Consider the function
\begin{equation*}
\Delta(v, 1) = Rice(v, 1)_{mean} - Rice(0, 1)_{mean}
\end{equation*}
where
$Rice(0, 1)_{mean} = Rayleigh(1)_{mean} = \sqrt{\pi / 2}$.

It has the following two properties:
\begin{itemize}
    \item $\Delta(0, 1) = 0$
    \item $\lim_{v \to \infty} \frac{\Delta(v, 1)}{v} = 1$ \footnote{As $v \to \infty$, if v increases by 1, the norm $|X|$ of every sample $X \sim Rice(v, 1)$ increases by an amount that also converges to 1. Thus $\Delta(v, 1)$ is linear in $v$ with slope 1.}
\end{itemize}

We next construct an algebraic approximation function
\begin{equation}
\label{eq:approx_func}
\delta(v, 1) = (v ^ \beta + \alpha ^ \beta) ^ \frac{1}{\beta} - \alpha
\end{equation}
with two rational number constants $\alpha > 0$, $\beta > 1$. Here $\alpha$ sets the difference between $v$ and $\delta(v, 1)$ as $v \to \infty$, and $\beta$ controls how $\delta(v, 1)$ varies for small $v$.
One can verify that $\delta(v, 1)$ has two similar properties:
\begin{itemize}
    \item $\delta(0, 1) = 0$
    \item $\lim_{v \to \infty} \frac{\delta(v, 1)}{v} = 1$
\end{itemize}

Figure \ref{fig:delta} shows a comparison of $\Delta(v, 1)$ and its approximation $\delta(v, 1)$. The $\Delta(v, 1)$ values are computed by generating the probability density function using numerical calculations of the Bessel function \cite{Schwartz2012}, and the approximation $\delta(v, 1)$ values are computed from the closed-form expression (\ref{eq:approx_func}). The constants $\alpha$ and $\beta$ are optimized to minimize the root-mean-square error on $|\Delta(v, 1) - \delta(v, 1)|$ (see Table \ref{table:alpha_beta}).

Once we have $\delta(v, 1)$, similar to (\ref{eq:delta_scale})
\begin{equation*}
\delta(v, \sigma) = \sigma \delta (v / \sigma, 1) = \sigma \left( \left( (v / \sigma) ^ \beta + \alpha ^ \beta \right) ^ {\frac{1}{\beta}} - \alpha \right)
\end{equation*}

\subsubsection{Numerical Algorithm for Computing the Real Error (Algorithm \ref{algo:map})}

\begin{algorithm}[t]
\caption{Compute the real error from validation error and map error. This algorithm uses mean but can be extended to any $q$-th quantile ($0 < q < 1$).}
\label{algo:map}
\begin{algorithmic}
\renewcommand{\algorithmicrequire}{\textbf{Input:}}
\renewcommand{\algorithmicensure}{\textbf{Output:}}
\REQUIRE validation error mean $u = |Err^{val}|_{mean}$, map error $v = |Err^{map}|$, $u > v$
\ENSURE real error mean $|Err^{real}|_{mean}$
  \STATE Find $\alpha, \beta$ for mean error in Table \ref{table:alpha_beta}, and $\gamma$ in Table \ref{table:gamma}
  \STATE Set $\sigma_{min} = (u - v) / \gamma$, $\sigma_{max} = (u - v) / (\gamma - \alpha)$
  \STATE Set $\epsilon$ for convergence
  \WHILE {$\sigma_{max} - \sigma_{min} > \epsilon$}
    \STATE $\sigma = (\sigma_{min} + \sigma_{max}) / 2$
    \IF {$(u / \sigma + \alpha - \gamma) ^ \beta - (v / \sigma) ^ \beta - \alpha ^ \beta < 0$}
      \STATE $\sigma_{max} = \sigma$
    \ELSE
      \STATE $\sigma_{min} = \sigma$
    \ENDIF
  \ENDWHILE
\RETURN $|Err^{real}|_{mean} = \sigma \gamma$
\end{algorithmic}
\end{algorithm}

Using the $\Delta(v, \sigma)$ approximation function, we can obtain the real error mean as follows:

Let $u = |Err^{val}|_{mean}$, $\gamma = Rice(0, 1)_{mean} = \sqrt{\pi / 2}$.
By definition of $\Delta_{mean}$,
\begin{align*}
|Err^{val}|_{mean} & = |Err^{real}|_{mean} + \Delta(v, \sigma) \\
u & = \sigma \gamma + \sigma \left( \left( (v / \sigma) ^ \beta + \alpha ^ \beta \right) ^ \frac{1}{\beta} - \alpha \right)
\end{align*}

This is equivalent to solving the equation
\begin{equation}
\label{eq:solve_sigma}
f(\sigma) = (u / \sigma + \alpha - \gamma) ^ \beta - (v / \sigma) ^ \beta - \alpha ^ \beta = 0
\end{equation}

Let $\sigma_{min} = \frac{u - v}{\gamma}$, $\sigma_{max} = \frac{u - v}{\gamma - \alpha}$.
\begin{align*}
f(\sigma_{min}) & = (v / \sigma_{min} + \alpha) ^ \beta - (v / \sigma_{min}) ^ \beta - \alpha ^ \beta > 0 \\
f(\sigma_{max}) & = - \alpha ^ \beta < 0
\end{align*}

and $\forall{\sigma < \sigma_{max}}$,
\begin{equation*}
f'(\sigma) = - \beta \left( (u / \sigma + \alpha - \gamma) ^ {\beta - 1} - (v / \sigma) ^ {\beta - 1} \right) \Big/ {\sigma^2} < 0
\end{equation*}

Therefore, we can binary search on $\sigma \in (\sigma_{min}, \sigma_{max})$ to solve $f(\sigma) = 0$. Finally,
$|Err^{real}|_{mean} = \sigma \gamma$.

Figure \ref{fig:impact_map} shows the impact of map error $\Delta(v, \sigma)$, which seems to increase slower (comparing to marking error) until when the map error $v$ is close to the validation error $u$. Intuitively, this can be explained as follows:
Let the real error be a zero mean Gaussian centered at the origin $O$, and we add a map error $V = (v, 0)$, where $v > 0$.
For any sample $X$, its error increases $|X - V| > |X - O|$ if and only if $X$ lies on the left side of the line $x = v / 2$.
Since the Gaussian has zero mean, there are more samples on the left side of the line $x = v / 2$, so the overall mean error will increase. But it increases slowly as the changes on two sides may cancel each other, until for sufficiently large $v$ where a significant majority of samples are on the left side of the line $x = v / 2$.

Similarly, Algorithm~\ref{algo:map} can be applied to any $q$-th quantile ($0 < q < 1$). Table~\ref{table:alpha_beta} and \ref{table:gamma} also list $\alpha$, $\beta$, and $\gamma$ values for the median and 95\%-tail.

Note that unlike marking error, for map error, Theorem~\ref{thm:tail_median} only holds for the real error
$\frac{|Err^{real}|_{tail}}{|Err^{real}|_{median}} > 2$,
but not for the validation error as
$\lim_{v \to \infty} \frac{|Err^{val}|_{tail}}{|Err^{val}|_{median}} = \lim_{v \to \infty} \frac{|Err^{real}|_{tail} + \Delta_{tail}(v, \sigma)}{|Err^{real}|_{median} + \Delta_{median}(v, \sigma)} = 1$, because $\lim_{v \to \infty} \frac{\Delta(v, \sigma)}{v} = 1$.

\begin{figure}[t]
\centerline
{\includegraphics[width=0.25\textwidth]{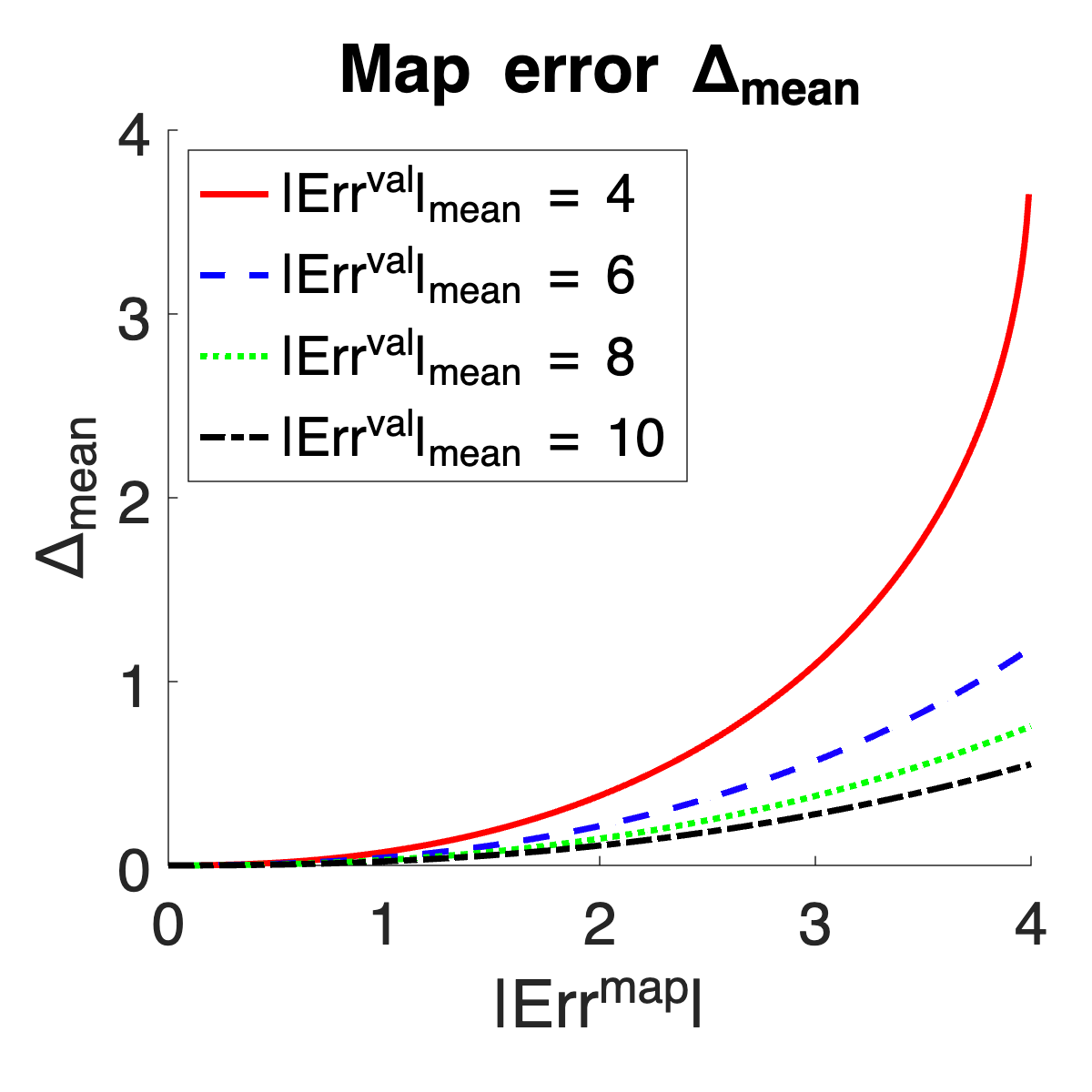}}
\caption{Impact of map error $\Delta_{mean}$ as a function of $|Err^{val}|_{mean}$ and $|Err^{map}|$. These curves are similar to Fig.~\ref{fig:impact_marking} on the marking error, but values are smaller.}
\label{fig:impact_map}
\end{figure}

\subsection{Comparing the Impact of Marking and Map Errors}
\label{sec:compare_error}

In Section~\ref{sec:map_error}, we briefly explained that the impact of map error is less than marking error on $\Delta_{mean}$, when the map error is equal to the marking error mean. Now we formally prove this result.

\begin{theorem}
\label{thm:impact_compare}
Given the validation error $u = |Err^{val}|_{mean}$ and the ground truth error $v = |Err^{gt}|_{mean}$ (either marking or map error), $u > v$. Let $\Delta_{mean}^{mark}$ be the impact of marking error when $v$ is $|Err^{mark}|_{mean}$, and $\Delta_{mean}^{map}$ be the impact of map error when $v$ is $|Err^{map}|$. With similar definitions on $\Delta_{median}$ / $\Delta_{tail}$, we have
\begin{align*}
& \Delta_{mean}^{mark} > \Delta_{mean}^{map}, ~ \forall{v / u < 0.9995} \\
& \Delta_{median}^{mark} > \Delta_{median}^{map}, ~ \forall{v / u < 0.9970} \\
& \Delta_{tail}^{mark} < \Delta_{tail}^{map}
\end{align*}
\end{theorem}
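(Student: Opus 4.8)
The plan is to use that the two scenarios share the \emph{same} validation error $u = |Err^{val}|_{mean}$ and the \emph{same} ground-truth magnitude $v$, so that comparing impacts is the same as comparing real errors. For marking, (\ref{eq:real_mean}) gives $\Delta^{mark}_{mean} = u - \sqrt{u^2 - v^2}$; for map, Algorithm~\ref{algo:map} produces $\Delta^{map}_{mean} = u - \sigma\gamma$ with $\sigma$ solving $u = \sigma\gamma + \delta(v,\sigma)$. Hence
\begin{equation*}
\Delta^{mark}_{mean} > \Delta^{map}_{mean} \iff \sqrt{u^2 - v^2} < \sigma\gamma \iff u^2 - v^2 < \sigma^2\gamma^2 .
\end{equation*}
First I would eliminate the common scale by parametrizing the map scenario with $w = v/\sigma$, which yields $v = \sigma w$ and $u = \sigma\,\phi(w)$ where $\phi(w) = \gamma - \alpha + (w^\beta + \alpha^\beta)^{1/\beta}$ (so $\delta(v,\sigma) = \sigma(\phi(w)-\gamma)$). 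Substituting collapses everything to one scalar inequality,
\begin{equation*}
\Delta^{mark} > \Delta^{map} \iff G(w) := \phi(w)^2 - w^2 < \gamma^2 \iff \phi(w) < R(w):= \sqrt{w^2+\gamma^2},
\end{equation*}
and the shared ratio is $t = v/u = w/\phi(w)$. A short computation shows $\phi(w) - w\phi'(w) = (\gamma-\alpha) + \alpha^\beta(w^\beta+\alpha^\beta)^{(1-\beta)/\beta} > 0$, so $t(w)$ is strictly increasing; thus any threshold in $w$ transfers to a threshold in $t$. The median and tail are handled identically with their own $\alpha,\beta,\gamma$ from Tables~\ref{table:alpha_beta}--\ref{table:gamma}.

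Next I would compare the two curves $\phi$ and $R$, which both pass through $(0,\gamma)$ with zero slope. Their near-origin difference is
\begin{equation*}
\phi(w) - R(w) = \tfrac{1}{\beta}\alpha^{1-\beta}w^\beta - \tfrac{1}{2\gamma}w^2 + (\text{higher order}),
\end{equation*}
so the small-$w$ sign is decided solely by whether $\beta$ exceeds $2$. For the mean and median $\beta > 2$, the $-w^2$ term wins and $\phi < R$, i.e. $\Delta^{mark} > \Delta^{map}$; for the tail $\beta < 2$, the $w^\beta$ term wins and $\phi > R$, i.e. the reversed inequality $\Delta^{mark} < \Delta^{map}$. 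This single expansion already accounts for why the three statistics fall into two opposite directions.

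To close the argument I would control the global shape. Since $(w^\beta+\alpha^\beta)^{1/\beta} \to w$, we have $\phi(w) - R(w) \to \gamma - \alpha > 0$ (the inequality $\gamma > \alpha$ holds in every row of the tables), so $\phi > R$ for all large $w$. For the tail ($\beta<2$) the difference is positive both near $0$ and at $\infty$; showing it never dips to zero gives $\Delta^{mark}_{tail} < \Delta^{map}_{tail}$ \emph{unconditionally}. For the mean and median ($\beta>2$) the difference is negative near $0$ and positive at $\infty$, so $\phi$ and $R$ cross; if they cross exactly once, at $w_0$, then $\Delta^{mark} > \Delta^{map} \iff w < w_0 \iff v/u < t_0$ with $t_0 = w_0/\phi(w_0)$, which I would evaluate numerically as $0.9995$ (mean) and $0.9970$ (median).

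The hard part is establishing the \emph{number} of crossings of $\phi$ and $R$ on $w>0$: exactly one for the mean and median, none for the tail. I would attack this by comparing the derivatives $\phi'(w) = \big(w^\beta/(w^\beta+\alpha^\beta)\big)^{(\beta-1)/\beta}$ and $R'(w) = w/\sqrt{w^2+\gamma^2}$, each of which increases monotonically from $0$ to $1$; controlling how many times $\phi' - R'$ changes sign bounds the number of interior extrema of $\phi - R$ and hence its zeros. Since $\alpha,\beta,\gamma$ are fixed constants, if this monotonicity bookkeeping becomes unwieldy the crossing count can instead be certified by checking the sign of $\phi - R$ on the relevant compact $w$-interval. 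Once the crossing structure is fixed, the thresholds follow immediately by root-finding on $G(w)=\gamma^2$ and substituting into $t = w/\phi(w)$.
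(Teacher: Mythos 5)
Your proposal is correct and follows the same overall strategy as the paper's proof --- express $\Delta^{mark}=u-\sqrt{u^2-v^2}$ and $\Delta^{map}=u-\sigma^{*}\gamma$, reduce the comparison to a single scalar algebraic inequality in one normalized variable, and locate the threshold numerically --- but your normalization is genuinely different and buys real clarity. The paper evaluates $f$ at the candidate point $\sqrt{u^2-v^2}/\gamma$ and uses monotonicity of $f$ on $(0,\sigma_{max})$ to decide which side of $\sigma^{*}$ it falls on; this forces a case split (their case (a) treats $\sqrt{u^2-v^2}/\gamma \ge \sigma_{max}$ separately) and lands on the function $g(\lambda)$ with $\lambda=\sqrt{(u-v)/(u+v)}$, whose sign structure is read off a plot of $g$ and $g'$ together with a numerically computed root $\lambda^{*}$. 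You instead normalize by the solution itself, $w=v/\sigma^{*}$, turning the condition into the two-curve comparison $\phi(w)$ versus $\sqrt{w^2+\gamma^2}$ with no case split, and your monotonicity of $t(w)=w/\phi(w)$ plays the role the paper assigns to monotonicity of $f$ (your thresholds $0.9995$ and $0.9970$ match theirs, since $v/u=(1-\lambda^{2})/(1+\lambda^{2})$ at $\lambda^{*}=0.0160$ and $0.0390$). The main payoff of your route is the local expansion showing that the sign of $\beta-2$ alone decides the small-$v/u$ direction, which explains structurally why the mean and median ($\beta>2$) behave oppositely to the tail ($\beta<2$) --- an insight the paper does not extract, as it simply observes from its figure that $g<0$ throughout for the tail. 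The gap you flag (the exact crossing count of $\phi$ and $R$) is present in the paper in the same form: it is settled there only by the plotted shape of $g$ and $g'$ plus binary search, so your fallback of certifying the sign on a compact interval puts your argument at the same level of rigor as the published proof.
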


\begin{proof}
We start with $\Delta_{mean}$. From Algorithm~\ref{algo:marking}, $\Delta_{mean}^{mark} = u - \sqrt{u ^ 2 - v ^ 2}$. From Algorithm \ref{algo:map}, $\Delta_{mean}^{map} = u - \sigma^{*} \gamma$ where $\sigma^{*} \in (\sigma_{min}, \sigma_{max})$ is the solution for $f(\sigma) = 0$ in (\ref{eq:solve_sigma}).

So $\Delta_{mean}^{mark} > \Delta_{mean}^{map}$ if and only if
\begin{equation}
\label{eq:proof_sigma}
\sigma^{*} > \sqrt{u ^ 2 - v ^ 2} \Big/ \gamma
\end{equation}

Let $s = \sqrt{\frac{u + v}{2}}$, $t = \sqrt{\frac{u - v}{2}}$, $\lambda = t / s < 1$.

Then $u = s^2 + t^2$, $v = s^2 - t^2$, $\frac{v}{u} = \frac{1 - \lambda^2}{1 + \lambda^2}$.

(a) When $\lambda = \frac{t}{s} \le \frac{\gamma - \alpha}{\gamma}$, i.e. $\frac{v}{u} \ge \frac{1 - \left( \frac{\gamma - \alpha}{\gamma} \right)^2}{1 + \left( \frac{\gamma - \alpha}{\gamma} \right)^2} = 0.9997$,
\begin{equation*}
\frac{\sqrt{u ^ 2 - v ^ 2}}{\gamma} = \frac{2st}{\gamma} \ge \frac{2t^2}{\gamma - \alpha} = \frac{u - v}{\gamma - \alpha} = \sigma_{max} > \sigma^{*}
\end{equation*}

In this case, $\Delta_{mean}^{mark} < \Delta_{mean}^{map}$.

(b) When $\lambda = \frac{t}{s} > \frac{\gamma - \alpha}{\gamma}$, i.e. $\frac{v}{u} < 0.9997$,
\begin{equation*}
\sqrt{u ^ 2 - v ^ 2} \Big/ \gamma < \sigma_{max}, ~ \sigma^{*} < \sigma_{max}
\end{equation*}

Since $\forall{\sigma < \sigma_{max}}$, $f'(\sigma) < 0$. (\ref{eq:proof_sigma}) is equivalent to
\begin{equation*}
f \left( \sqrt{u ^ 2 - v ^ 2} \Big/ \gamma \right) > f(\sigma^{*}) = 0 \end{equation*}
\begin{equation*}
\left( \frac{u}{\frac{\sqrt{u ^ 2 - v ^ 2}}{\gamma}} + \alpha - \gamma \right) ^ \beta - \left( \frac{v}{\frac{\sqrt{u ^ 2 - v ^ 2}}{\gamma}} \right) ^ \beta - \alpha ^ \beta > 0
\end{equation*}

Replace $(u, v)$ by $(s, t)$,
\begin{equation*}
\left( \frac{s^2 + t^2}{2st} + \frac{\alpha}{\gamma} - 1 \right) ^ \beta - \left( \frac{s^2 - t^2}{2st} \right) ^ \beta - \left( \frac{\alpha}{\gamma} \right) ^ \beta > 0
\end{equation*}

Using $\lambda = t / s$,
\begin{align}
g(\lambda) & = \left( 1 + 2 \lambda (\alpha / \gamma - 1) + \lambda^2 \right) ^ \beta - (1 - \lambda^2) ^ \beta - ( 2 \lambda \alpha / \gamma) ^ \beta \nonumber \\
& > 0 \label{eq:lambda}
\end{align}

Figure \ref{fig:lambda} plots the algebraic function $g(\lambda)$ and its derivative $g'(\lambda)$ on $0 \le \lambda \le 1$. They satisfy
\begin{align*}
& g(0) = 0, ~ g'(0) = 2 \beta ( \alpha / \gamma - 1 ) < 0 \\
& g(1 - \alpha / \gamma) = - ( 2 \alpha (\gamma - \alpha) / \gamma ^2 ) ^ \beta < 0 \\
& g(1) = 0, ~ g'(1) = 0
\end{align*}

From $g'(\lambda)$, we know there exists a root $g(\lambda^{*}) = 0$ such that $\forall{0 < \lambda < \lambda^{*}}$, $g(\lambda) < 0$ and $\forall{\lambda^{*} < \lambda < 1}$, $g(\lambda) > 0$. Therefore, we can binary search $g(\lambda) = 0$ for a numerical solution $\lambda^{*} = 0.0160$.
So in this case, (\ref{eq:lambda}) holds if and only if $\lambda > \lambda^{*}$, i.e.
$v / u < \frac{1 - {\lambda^{*}}^2}{1 + {\lambda^{*}}^2} = 0.9995$.

Combining (a) and (b), $\Delta_{mean}^{mark} > \Delta_{mean}^{map}$ if and only if $v / u < 0.9995$.

The proof for $\Delta_{median}$ is similar. $\lambda^{*} = 0.0390$ and $\Delta_{median}^{mark} > \Delta_{median}^{map}$ if and only if $v / u < 0.9970$.

$\Delta_{tail}$ is different. As shown in Figure \ref{fig:lambda}, $\forall{0 < \lambda < 1}$, $g(\lambda) < 0$. So (\ref{eq:lambda}) never holds and $\Delta_{tail}^{mark} < \Delta_{tail}^{map}$.
\end{proof}

Although theoretically there are examples $\Delta_{mean}^{mark} < \Delta_{mean}^{map}$ (e.g. $\lambda = \frac{\gamma - \alpha}{\gamma}$). In practice, $v / u > 0.99$ should never happen (ground truth error being more than 99\% of validation error). Thus, we conclude that marking error has a larger impact on the mean and median, while map error has a larger impact on the tail.

\begin{figure}[t]
\centerline
{\includegraphics[width=0.5\textwidth]{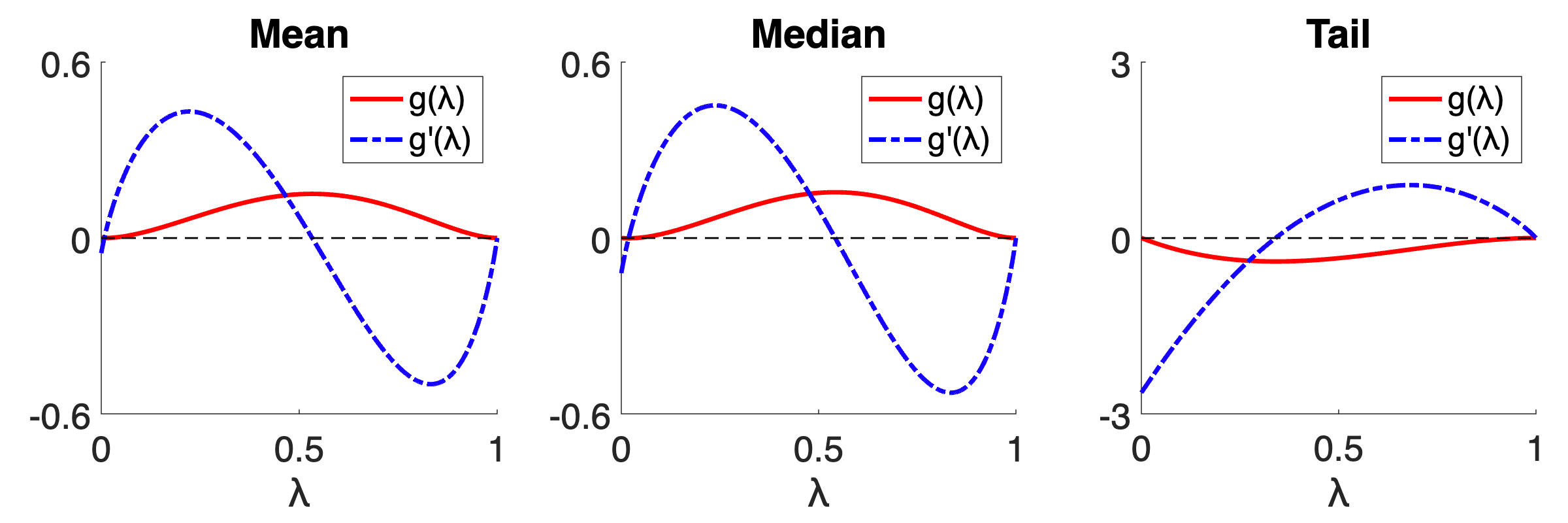}}
\caption{Algebraic function $g(\lambda)$ and its derivative $g'(\lambda)$.}
\label{fig:lambda}
\end{figure}

\subsection{Map Error (Scale)}
\label{sec:scale_error}

Besides the translation error, another type of map error is the scale error, which can be modeled by a linear transform
$\widehat{X}^{gt} = S X^{gt} + V$
where $S = diag[s_x, s_y]$ is a scaling matrix without rotation, and $V$ is a translation vector. Then
\begin{equation*}
Err^{map} = \widehat{X}^{gt} - X^{gt} = (S - I) X^{gt} + V
\end{equation*}
where $I$ is an identity matrix. For $S = I$, this reduces to the model we discussed in Section \ref{sec:map_error}.

Note that in this paper, our analysis only relies on the error distribution $Err^{gt}$. %
It does not require the distribution of ground truth location $X^{gt}$ (i.e. where the data is collected). With the scale error ($S \ne I$), the distribution of $Err^{map}$ depends on the distribution of $X^{gt}$. Unless $X^{gt}$ also follows a normal distribution and $s_x = s_y$, we cannot apply the same analysis in this paper. This additional distribution of $X^{gt}$ makes the model more complicated. We leave this as a topic for future work.

\section{Experimental Validation}
\label{sec:evaluation}

In this section, we validate the theoretical assumptions and results on a real dataset collected in a typical environment. We start by describing the environment and the localization technique. We then validate the error distributions assumptions. Finally, we compare our theoretical results with experimental results obtained directly from the real data.

\subsection{Experimental Setup}

Without loss of generality, we use a probabilistic WiFi fingerprinting localization system in our experiments similar to \cite{youssef2005horus, youssef2003wlan}. %
To collect the necessary data for validation, we deploy the localization system in a floor in our university campus building with a 37m x 17m area containing labs, offices, meeting rooms as well as corridors (Figure~\ref{fig:testbed}). We use the already installed WiFi infrastructure in the building, mainly four APs installed in the same floor in addition to 12 APs overheard from other floors. %
We obtain the real ground truth (without marking or map errors) based on the \textbf{\em{landmarks in the environment}} such as doors and fixed access points locations. We collect the WiFi scans by a Samsung S4 cell phone that scans for the WiFi access points at different 24 discrete reference locations that cover the entire area of interest uniformly.

\begin{figure}[t]
\centerline
{\includegraphics[width=0.5\textwidth]{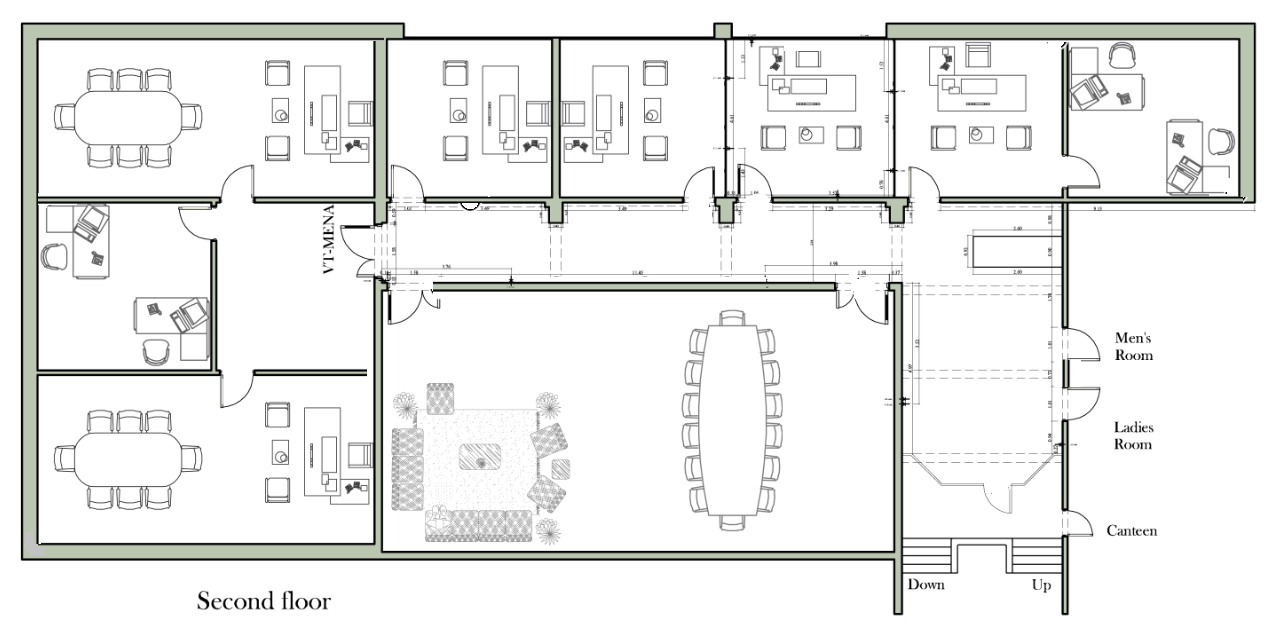}}
\caption{Indoor testbed used for experimental validation. }
\label{fig:testbed}
\end{figure}

\subsection{Validating Error Distributions Assumptions}

\subsubsection{Marking Error}

We start by getting the marking error distribution by comparing the marked ground truth locations, which are obtained by clicking on the map, with the real ground truth locations from the floor landmarks. %
Our results show that the error in X follows a normal distribution with $\mu = -0.04$ and $\sigma = 0.16$, i.e. $\mathcal{N}(-0.04, 0.16^2)$, while the error in Y follows a normal distribution $\mathcal{N}(-0.03, 0.09^2)$. The mean values in both distributions are small around zero which matches our theoretical assumptions that the marking errors in X and Y follow normal distributions with mean zero.
The quantile-quantile plot in Figure~\ref{fig:marking2} shows that the marking error norm fits a Rayleigh distribution with $\sigma = 0.14$ very well. This also matches our theoretical analysis. %

\subsubsection{Validation Error}

The validation error distribution is obtained by comparing the system estimated location with the marked (not real) ground truth locations.
Our results show that the error in X follows a normal distribution $\mathcal{N}(0.84, 3.86^2)$, while the error in Y follows a normal distribution $\mathcal{N}(0.41, 1.54^2)$. In our analysis, we assume that the error distributions in X and Y are identical. However, in our experiments, the validation error in X direction is higher than in Y direction. This is due to the geometry of the building (e.g. the building width is larger than its height, leading to more errors in the X direction), making the validation error norm possibly deviate from the Rayleigh distribution. %

Figure~\ref{fig:val2} shows the validation error norm distribution
and its best fit using a Rayleigh distribution with $\sigma = 2.81$. Since in our experiments the error distributions in X and Y are not identical, Rayleigh distribution does not lead to the best fit and other distributions, e.g. the exponential distribution in Figure~\ref{fig:val2}(c), could be a better choice. Nonetheless, Rayleigh distribution still fits most percentiles, except for very large tail errors.

\begin{figure}[!t]
        \centering
        \begin{subfigure}[b]{0.25\textwidth}
                \includegraphics[width=0.99\linewidth]{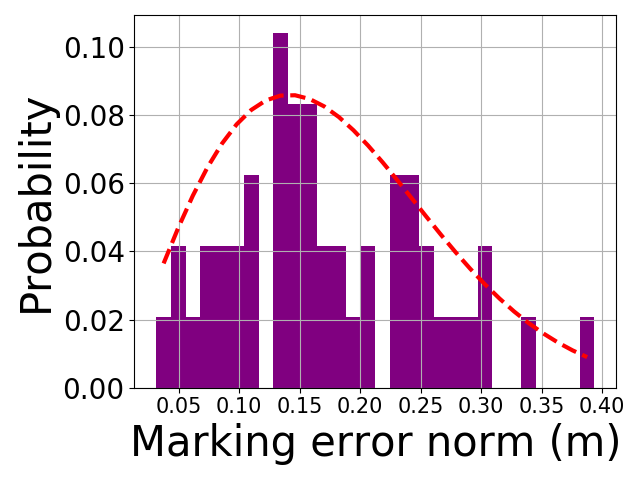}
           \caption{Marking error distribution.}
           \label{fig:marking2a}
        \end{subfigure}%
        \begin{subfigure}[b]{0.25\textwidth}
                \includegraphics[width=0.99\linewidth]{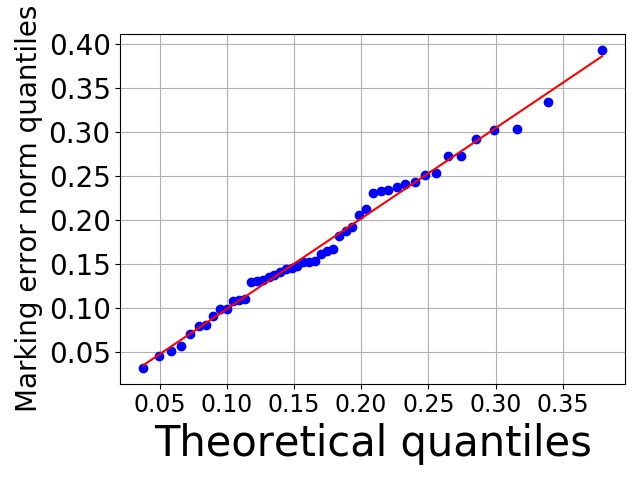}
              \caption{Rayleigh distribution test.}
            \label{fig:marking2b}
        \end{subfigure}
        \caption{Marking error norm distribution with Q-Q plot vs. Rayleigh distribution. %
        }
             \label{fig:marking2}
\end{figure}

\begin{figure*}[!t]
\vspace{-1mm}
        \centering
        \begin{subfigure}[b]{0.25\textwidth}
                \includegraphics[width=0.99\linewidth]{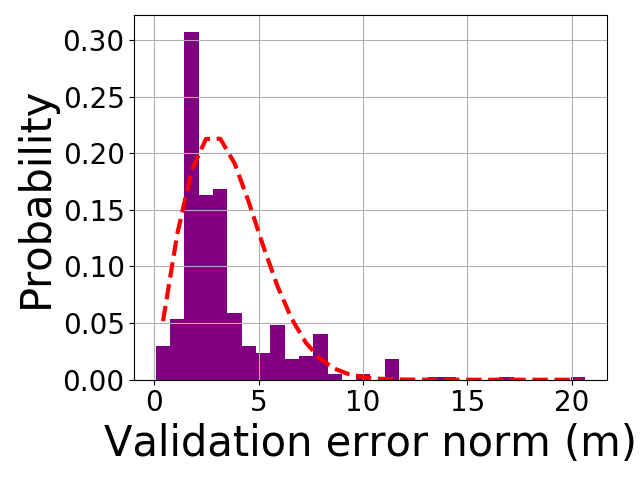}
           \caption{Validation error distribution.}
           \label{}
        \end{subfigure}%
        \begin{subfigure}[b]{0.25\textwidth}
                \includegraphics[width=0.99\linewidth]{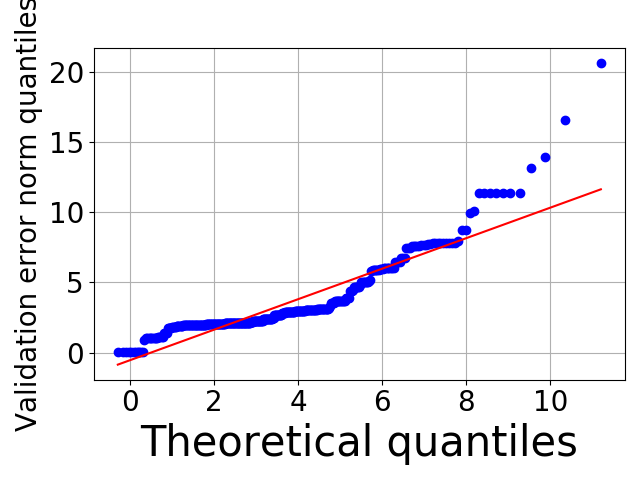}
           \caption{Rayleigh distribution test.}
           \label{}
        \end{subfigure}%
         \begin{subfigure}[b]{0.25\textwidth}
                \includegraphics[width=0.99\linewidth]{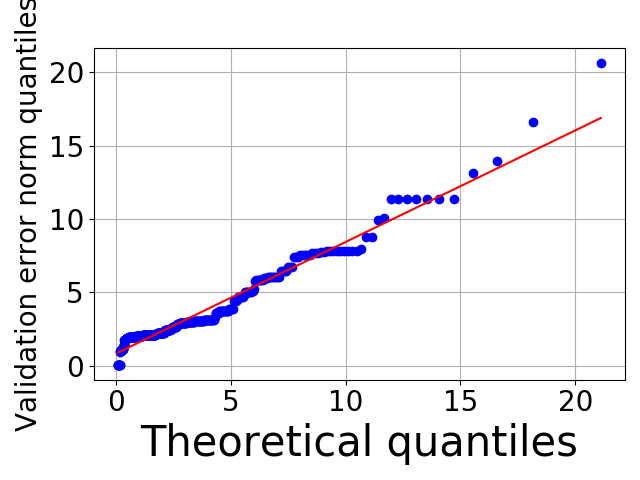}
              \caption{Exponential distribution test.}
            \label{}
        \end{subfigure}
        \caption{Validation error norm distribution with Q-Q plot vs. Rayleigh and exponential distributions. The figures for real error norm distribution are similar.}
             \label{fig:val2}
\vspace{-1mm}
\end{figure*}

\subsubsection{Real Error}

The real error distribution is obtained by comparing the system estimated location with the real ground truth locations in the collected dataset (obtained based on the landmarks in the environment).
Similar to the validation error, the real error distributions in X and Y follow normal distributions $\mathcal{N}(0.80, 3.89^2)$ and $\mathcal{N}(0.39, 1.53^2)$, respectively. The real error norm also fits a Rayleigh distribution with $\sigma = 2.78$ except at the tail. Similarly, this is due to the error distributions in X and Y being not identical. %

\subsubsection{Validation Map Error}

To get the validation map error distribution, we shift the real ground truth locations in the evaluation dataset in X and Y directions, simulating offset errors in the provided map. Figure~\ref{fig:map2} shows the validation error norm distribution after shifting the map by 3m in both X and Y directions.
The validation error norm fits the Rice distribution with $v = 4.65$ and $\sigma = 3.29$ %
better than other distributions most of the time, except at the tail (asymmetric X/Y distribution). Note that when the map error exists, although the exponential distribution gives larger tail errors, it does not perform well at the lower percentiles.

\begin{figure*}[!t]
\vspace{-1mm}
        \centering
        \begin{subfigure}[b]{0.25\textwidth}
                \includegraphics[width=0.99\linewidth]{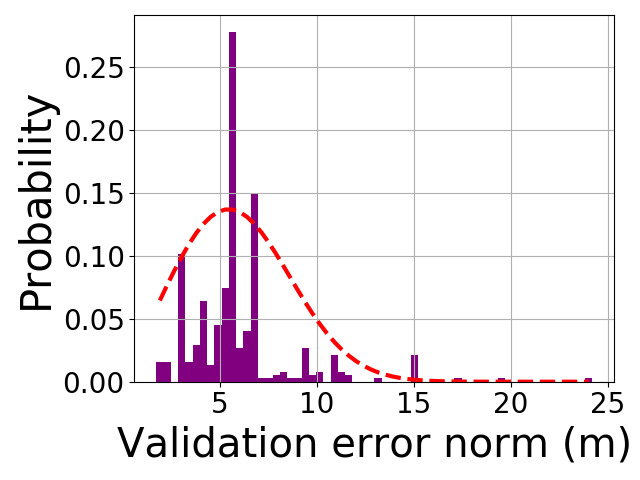}
           \caption{Validation error distribution.}
           \label{}
        \end{subfigure}%
        \begin{subfigure}[b]{0.25\textwidth}
                \includegraphics[width=0.99\linewidth]{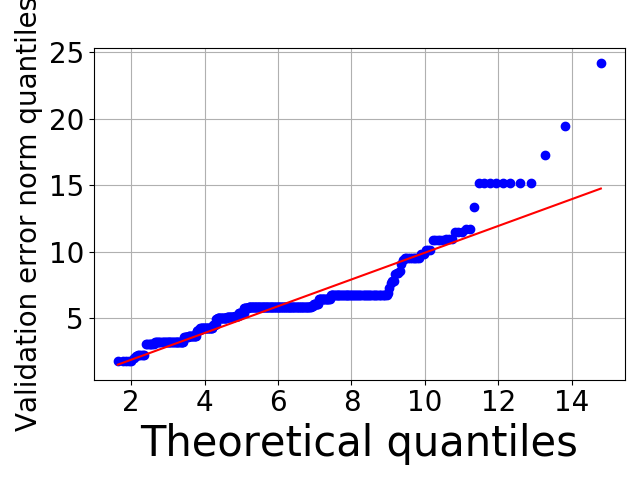}
           \caption{Rice distribution test.}
           \label{}
        \end{subfigure}%
                 \begin{subfigure}[b]{0.25\textwidth}
                \includegraphics[width=0.99\linewidth]{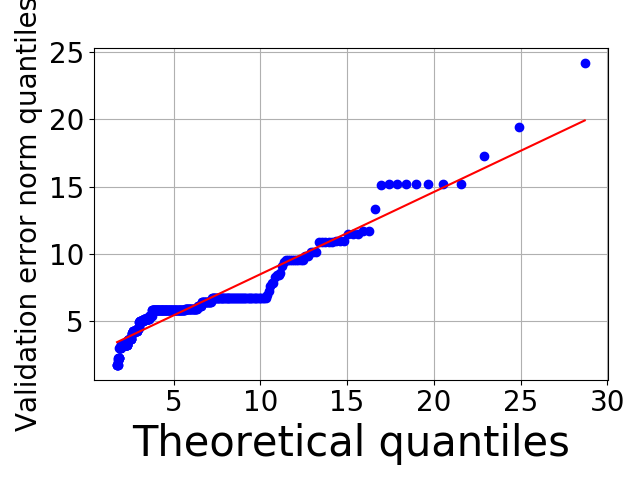}
              \caption{Exponential distribution test.}
            \label{}
        \end{subfigure}
        \caption{Validation error norm distribution after shifting the map by 3m in both X and Y directions with Q-Q plot vs. Rice and exponential distributions.}
             \label{fig:map2}
\vspace{-1mm}
\end{figure*}

\subsection{Theoretical vs Experimental Real Error}

\subsubsection{Marking Error}

Figure~\ref{fig:cdf_exp_thr} presents a CDF comparison between the theoretical and experimental real errors. The theoretical real error is obtained from the marking and validation errors by applying Algorithm~\ref{algo:marking}. The experimental real error is obtained directly from the data. The figure shows that the theoretical real error matches the experimental one within 4\% in all percentiles.
Table~\ref{table:exp_th_err} further shows the summary statistics of the results.

On Theorem~\ref{thm:tail_median}, the ratio of the validation error tail to the median is $\frac{8.05}{2.47} = 3.26$, and the ratio of the validation error tail to the mean is $\frac{8.05}{3.41} = 2.36$.
Similarly, the ratio of the real error tail to the median is $\frac{7.99}{2.37} = 3.37$ and the ratio of the real error tail to the mean is $\frac{7.99}{3.39} = 2.36$.
These show that in practice, the 95\%-tail error is more than twice of the median/mean errors. Again, these higher ratios are because of the larger tail errors from asymmetric X/Y distribution.

\begin{figure}[t]
\vspace{-1mm}
\centerline
{\includegraphics[width=0.3\textwidth]{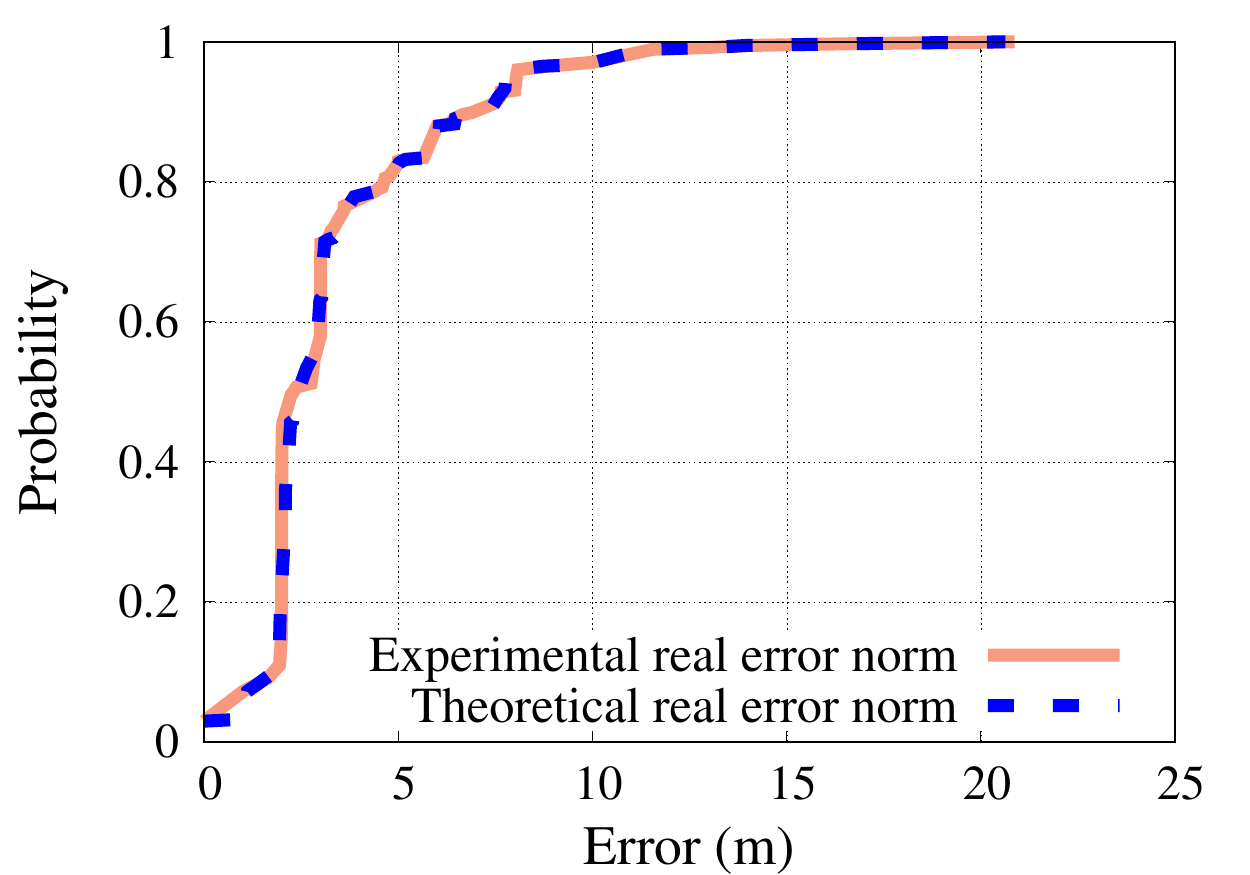}}
\caption{A comparison between experimental and theoretical real error norm CDFs (from Algorithm~\ref{algo:marking}).}
\label{fig:cdf_exp_thr}
\end{figure}

\subsubsection{Map Error}

To study the effect of applying Algorithm~\ref{algo:map} on the localization accuracy, we shift the real ground truth locations with different values from 1m to 6m in X and Y directions, simulating offset errors in the provided map.
Figure~\ref{fig:algo2} compares the validation error norm, theoretical real error norm, and experimental real error norm for the different map shift values. The validation error is obtained by comparing the system estimated location with the shifted ground truth locations in the collected dataset. Hence, it increases with the increase of the map shift. The experimental real error is obtained by comparing the system estimated location with the real ground truth locations in the collected dataset. Finally, the theoretical real error is obtained from Algorithm~\ref{algo:map} given the map shift value and the validation error statistics.

The figure shows that the calculated theoretical real error is closer to the experimental real error than the validation error norm which is typically used to evaluate the localization algorithms.
In particular, Algorithm~\ref{algo:map} does provide a more accurate estimate of the median and tail localization error by more than 20\% and 5\%, respectively, when the map is shifted by 1m in X and Y directions. This enhancement increases to 150\% and 72\%, respectively, when the map is shifted by 6m. Therefore, this algorithm can be used to obtain a more realistic quantification of the localization system error.

\begin{table}
\centering
\caption{Summary statistics for the different error metrics. The experimental results are calculated directly from the data while the theoretical real error is calculated from Algorithm~\ref{algo:marking}.} %
\label{table:exp_th_err}
\begin{tabular}{|c|c|c|c|} \hline
	\centering
    Metric & Marking error & Real error & Real error \\
    Type & Experimental & Experimental & Theoretical \\ \hline
	~ Mean ~ & 0.17 & 3.39 & ~ 3.40 (\textbf{0.3\%}) ~ \\
	~ 0.25 Q ~ & 0.11 & 2.00 & ~ 2.02 (\textbf{1.0\%}) ~ \\
	~ Median ~ & 0.15 & 2.37 & ~ 2.46 (\textbf{3.8\%}) ~ \\
	~ 0.75 Q ~ & 0.24 & 3.60 & ~ 3.70 (\textbf{2.8\%}) ~ \\
	~ 0.95 Q ~ & 0.30 & 7.99 & ~ 8.04 (\textbf{0.6\%}) ~ \\ \hline
\end{tabular}
\end{table}

\begin{figure*}[!t]
        \centering
        \begin{subfigure}[b]{0.3\linewidth}
                \includegraphics[width=0.99\linewidth]{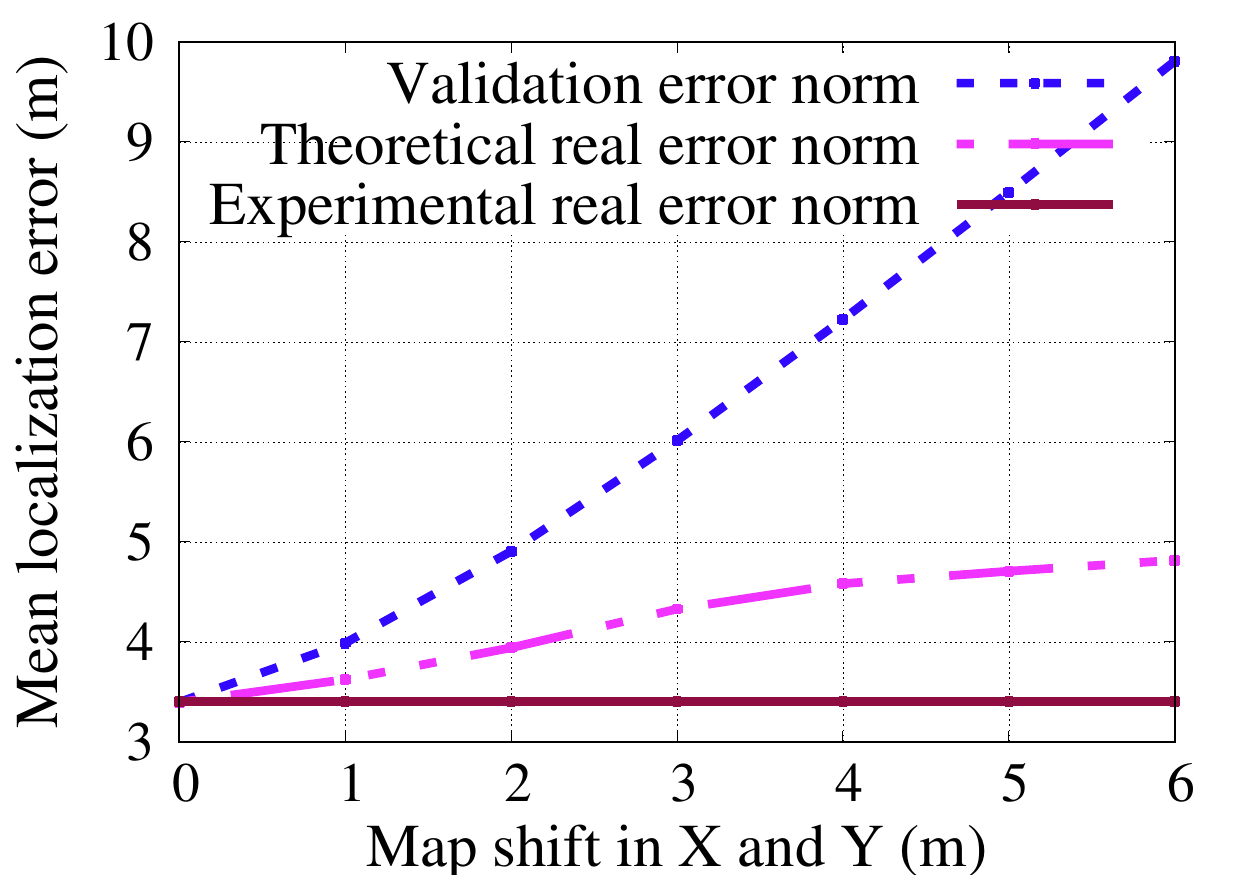}
           \caption{Mean.}
           \label{}
        \end{subfigure}%
        \begin{subfigure}[b]{0.3\linewidth}
                \includegraphics[width=0.99\linewidth]{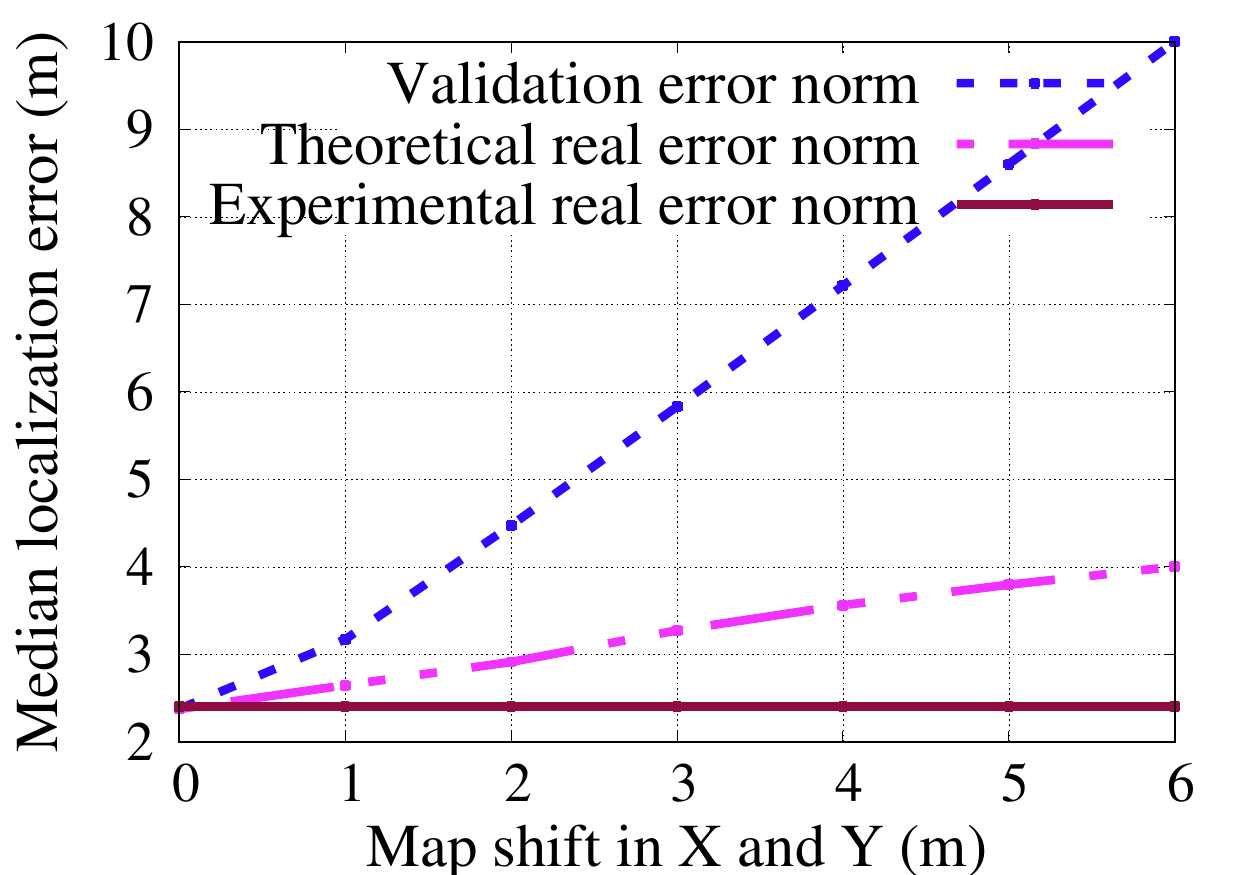}
           \caption{Median.}
           \label{}
        \end{subfigure}%
        \begin{subfigure}[b]{0.3\linewidth}
                \includegraphics[width=0.99\linewidth]{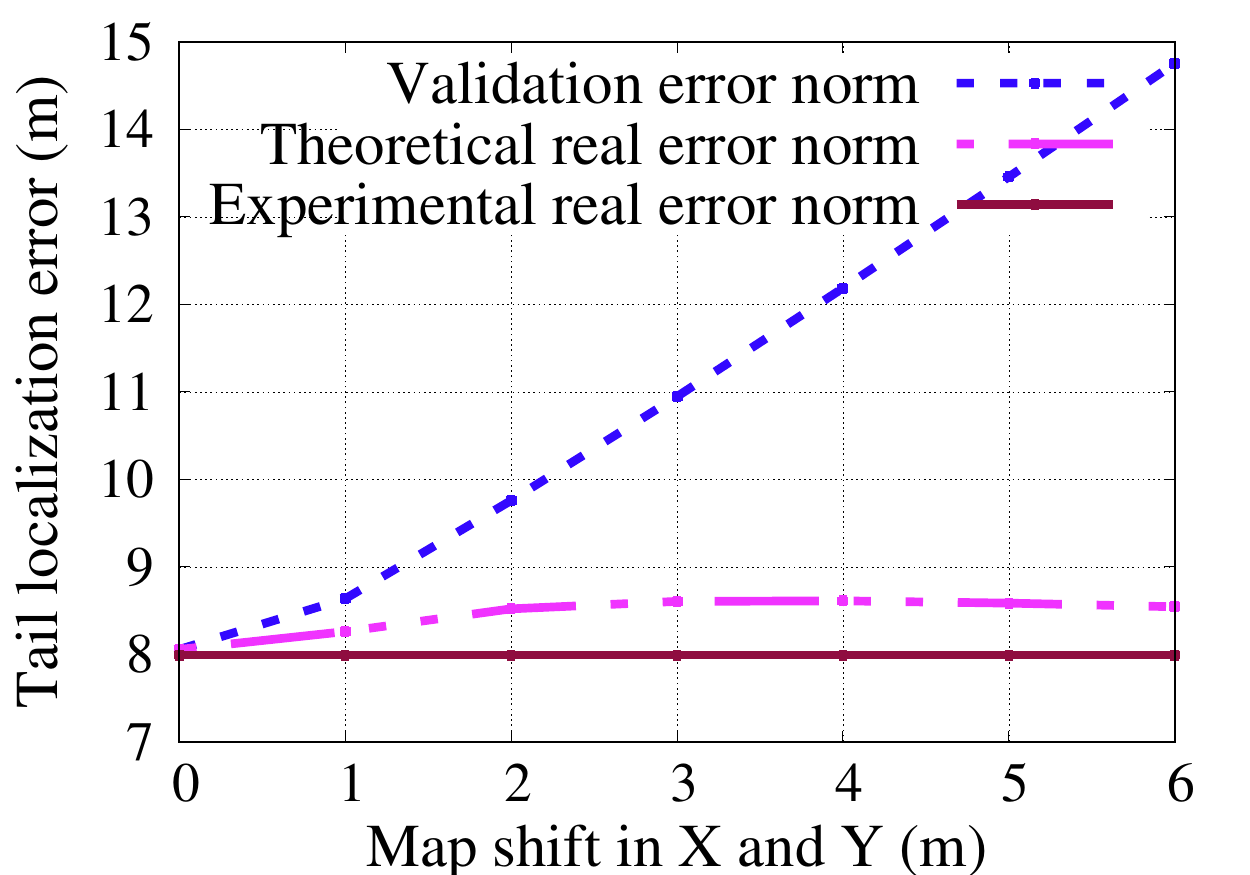}
              \caption{Tail.}
            \label{}
        \end{subfigure}
        \caption{A comparison between the validation error norm, theoretical real error norm (from Algorithm~\ref{algo:map}), and experimental real error norm for the different map shift values.} %
             \label{fig:algo2}
\end{figure*}

\section{Related Work}
\label{related_work}

In this section, we present a brief overview of the previous work on evaluating the accuracy of localization systems that is most relevant.

\subsection{AP Density Analysis}

In literature, there are studies that focus on the effect of access points (APs) number and position on the localization accuracy \cite{jia2019selecting, tian2020optimizing}. In \cite{jia2019selecting}, the authors proposed the theoretical error analysis for WiFi-based localization. It is shown that the Cramer-Rao lower bound (CRLB) \cite{smith2000intrinsic, van2004optimum} is inversely proportional to the number of the RSS measurements available from APs. They show that it is not necessary to leverage all the available APs to achieve the best accuracy. %
In \cite{tian2020optimizing}, authors proposed an algorithm that searches for the optimal APs and beacons placement that maximizes the localization accuracy. They tackle the problem of optimizing AP and beacon placement, which is NP-Complete, by proposing a heuristic differential evolution algorithm based on the widely used CRLB.

On the other hand, in this work, our goal is different: to get the real error statistics from the marking and map errors. Moreover, unlike these studies, which are limited to APs-based localization systems, our analysis is general for any localization system.

\subsection{Error Bounds Analysis}

Another set of studies put bounds on the localization error. Examples include time-of-arrival \cite{qi2006time}, time-difference-of-arrival \cite{dersan2002passive}, angle-of-arrival \cite{dersan2002passive}, and the RSS fingerprinting \cite{chang2004estimation, venkatesh2008multiple, patwari2002location}. The CRLB is widely used in these studies to put a bound on localization error \cite{chang2004estimation, qi2006time, hossain2010cramer}. In \cite{chang2004estimation}, authors studied the CRLB for anchored and anchor-free localization using noisy range measurements. They gave a method to compute the CRLB in terms of the geometry of the sensor network. In \cite{qi2006time}, the authors investigate the improvement in positioning accuracy if all multipath delays are processed (instead of using the first path). The authors show that using the first arrival only is sufficient for optimal localization when there is no prior information about the non-line-of-sight delays. When such prior information is available, the multipath delays can improve the localization accuracy. The best achievable accuracy is evaluated in terms of CRLB and the generalized-CRLB. In \cite{hossain2010cramer}, the authors analyze the CRLB of localization using signal strength difference as location fingerprint.

In contrast, our target is not to put a bound on the localization error only, but to establish bounds on different metrics between the localization error and ground truth errors.

\subsection{Localization Confidence Estimation}

Recently, different confidence estimation techniques are proposed for different localization systems. Examples include GPS \cite{drawil2012gps,moghtadaiee2012accuracy}, GNSS \cite{niu2014using} and indoor localization techniques such as \cite{lemelson2009error, moghtadaiee2012accuracy, dearman2007exploration}. Confidence estimation for the GPS is typically derived from the geometric dilution of precision which measures the confidence as a function of error caused by the geometry of the GPS satellites
\cite{drawil2012gps, moghtadaiee2012accuracy, niu2014using, dearman2007exploration}. Authors of \cite{drawil2012gps, niu2014using} further analyze the error characteristics of GPS and GNSS localization systems and derive an error model to estimate the localization error based on a combination of the number of satellites, dilution of precision, received signal strength and receiver speed.
In \cite{dearman2007exploration}, authors maintain a database of locations and their corresponding error measurement to estimate the localization accuracy.
In \cite{niu2014using}, authors proposed a technique that identifies the dominant noise types and builds an error source model to estimate the GNSS positioning error based on the Allen variance method \cite{el2007analysis}.
In \cite{elbakly2016cone}, authors estimate the indoor localization confidence by assuming that the error in user location follows a Gaussian distribution. They proposed a system that can work in real-time to get the confidence in the estimated location from the history of the previous estimated locations.

In comparison to these systems, which do not consider the marking and map errors, we provide a general framework to handle both kinds of ground truth errors.

\section{Conclusions}
\label{conc}

In this paper, we presented a theoretical framework for analyzing the effect of ground truth errors on the evaluation of localization systems. We designed two algorithms for computing the real algorithmic error from the validation error and marking/map ground truth errors, respectively. We showed that the impact of marking error is quadratic in its ground truth error, and inversely proportional to validation error. We further established bounds on different performance metrics: that the 95\%-tail error is at least twice of the median/mean errors and proved that marking error has more impact than map error on the mean and median, but less impact on the tail.

We validated our theoretical assumptions and analysis on a real indoor WiFi dataset. Our experiments show the ability of our analysis to obtain a more realistic localization error in the presence of ground truth errors. Specifically, we showed that Algorithm~\ref{algo:marking} matches the real error within 4\% in all percentiles, and Algorithm~\ref{algo:map} provides a more accurate estimate of the median and tail errors by more than 150\% and 72\%, respectively, when the map is shifted by 6m.

For future work, we are extending our analysis to address map scale errors and modifying the model to handle asymmetric X/Y error distributions.

\section*{Acknowledgment}

The authors wish to thank Patrick Robertson for proposing the initial questions to be considered in this paper.

\end{document}